\pgfplotsset{compat=1.17}
\definecolor{DarkGreen}{rgb}{0.2,0.6,0.2}
\definecolor{purple}{rgb}{0.6,0.3,0.8}
\def\d{\mathrm{d}}
\newcommand{\E}{\mathbb{E}}
\newcommand{\R}{\mathbb{R}}
\newcommand{\dsquare}{\mathop{  \square} \displaylimits}
\newcommand{\p}{\mathbb{P}}
\newcommand{\X}{\mathcal{X}}
\renewcommand{\H}{\mathcal{H}}
\newcommand{\id}{\mathds{1}}
\renewcommand{\ge}{\geqslant}
\renewcommand{\le}{\leqslant}
\renewcommand{\geq}{\geqslant}
\renewcommand{\leq}{\leqslant}
\renewcommand{\epsilon}{\varepsilon}
\newcommand{\esssup}{\mathrm{ess\mbox{-}sup}}
\newcommand{\essinf}{\mathrm{ess\mbox{-}inf}}
\theoremstyle{plain}
\newtheorem{theorem}{Theorem}
\newtheorem{corollary}{Corollary}
\newtheorem{lemma}{Lemma}
\newtheorem{proposition}{Proposition}
\theoremstyle{definition}
\newtheorem{definition}{Definition}
\newtheorem{example}{Example}
\newtheorem{assumption}{Assumption}
\theoremstyle{remark}
\newtheorem{remark}{Remark}
\newcommand{\VaR}{\mathrm{VaR}}
\newcommand{\ES}{\mathrm{ES}}
\newcommand{\dboxminus}{\mathop{  \boxminus} \displaylimits}
\newcommand{\dboxplus}{\mathop{  \boxplus} \displaylimits}
\title{Counter-monotonic risk allocations and distortion risk measures} 
\author{
Mario Ghossoub\thanks%
  {Department of Statistics and Actuarial Science,
  University of Waterloo,
  Waterloo, Ontario, Canada.
  E-mail: \href{mailto:mario.ghossoub@uwaterloo.ca}{
mario.ghossoub@uwaterloo.ca}.}
  \and Qinghua Ren\thanks%
  {Department of Statistics and Actuarial Science,
  University of Waterloo,
  Waterloo, Ontario, Canada.
  E-mail: \href{mailto:qinghua.ren@uwaterloo.ca}{qinghua.ren@uwaterloo.ca}.}
  \and Ruodu Wang\thanks%
  {Department of Statistics and Actuarial Science,
  University of Waterloo,
  Waterloo, Ontario, Canada.
  E-mail: \href{mailto:wang@uwaterloo.ca}{wang@uwaterloo.ca}.}
  }
\begin{document}
\maketitle

\begin{abstract} 
In risk-sharing markets with aggregate uncertainty, characterizing Pareto-optimal allocations when agents might not be risk averse is a challenging task, and the literature has only provided limited explicit results thus far. In particular, Pareto optima in such a setting may not necessarily be comonotonic, in contrast to the case of risk-averse agents. In fact, when market participants are risk-seeking, Pareto-optimal allocations are counter-monotonic. Counter-monotonicity of Pareto optima also arises in some situations for quantile-optimizing agents. In this paper, we provide a systematic study of efficient risk sharing in markets where allocations are constrained to be counter-monotonic.  
The preferences of the agents are   modelled by a common distortion risk measure, or equivalently, by a common Yaari dual utility. We consider three different settings:  risk-averse agents, risk-seeking agents, and those with an inverse S-shaped distortion function. In each case, we provide useful characterizations of optimal allocations, for both the counter-monotonic market and the unconstrained market. 
%
%
To illustrate our results, we consider an application to a portfolio choice problem for a portfolio manager tasked with managing the investments of a group of clients, with varying levels of risk aversion or risk seeking. We determine explicitly the optimal investment strategies in this case. Our results confirm the intuition that a manager investing on behalf of risk-seeking agents tends to invest more in risky assets than a manager acting on behalf of risk-averse agents.
\medskip
\end{abstract}

\section{Introduction}

The literature on efficient allocations in pure exchange economies, or optimal risk sharing in risk-sharing markets, has hitherto mostly been interested in Pareto optimality (also called Pareto efficiency)  for risk-averse agents, both within the classical expected-utility theory (EUT) framework and beyond. For decision-making under objective risk, risk aversion is commonly defined as consistency with respect to second-order stochastic dominance, as argued by \cite{RothschildStiglitz1970}. The seminal work of  \cite{borch1962} and \cite{wilson1968theory} on risk sharing for risk-averse agents within EUT provided an explicit characterization of Pareto-optimal allocations, and showed in particular that optimal allocations are \emph{comonotonic}. This property of optima was extended beyond the EUT framework, and a cornerstone result in this literature is the so-called \textit{comonotonic improvement theorem} (e.g., \cite{landsberger1994co}, \cite{carlier2003core}, \cite{ruschendorf2013mathematical}, or \cite{denuitetal2023comonotonicity}). Existence and comonotonicity of Pareto optima beyond EUT has been established widely. See, for instance, \cite{chateauneuf2000optimal}, \cite{dana2004ambiguity}, \cite{tsanakas2006risk}, \cite{decastro2011ambiguity}, or \cite{beissner2023optimal}, for risk-sharing markets with ambiguity-sensitive agents; or \cite{barrieu2005inf}, \cite{jouini2008optimal}, and \cite{filipovic2008optimal} for the popular class of law-invariant monetary utilities, or law-invariant convex risk measures. \cite{RavanelliSvindland2014} showed the existence and comonotonicity of Pareto optima for a class of law-invariant variational preferences that are SSD-consistent. Recently, \cite{ghossoubzhu2024} provided an explicit characterization of Pareto optima and showed their comonotonicity, for a wide class of robust SSD-consistent concave utility functionals.

 Pareto optimality in risk-sharing markets that include risk-seeking agents is less well-studied. A first step in this direction was taken by \cite{ACGNecta2018, AGS2022} and \cite{HeringsZhany2022} in an equilibrium context. \cite{beissner2023optimal} allowed for the existence of risk-loving and ambiguity-loving agents in exchange economies. Recently, \cite{lauzier2023pairwise}   provided a stochastic representation of counter-monotonicity. Using this result, \cite{lauzier2024negatively} further derived the so-called \textit{counter-monotonic improvement theorem},   a counterpart to the comonotonic improvement theorem. Just as risk aversion is linked to comonotonicity, risk seeking is linked to counter-monotonicity. The counter-monotonic improvement theorem states that for any random vector bounded from below or above, there exists a counter-monotonic random vector whose components are riskier than those of the given random vector. An important implication of this theorem is that counter-monotonic allocations will always be preferred by risk-seeking agents. In addition to risk-seeking agents,  counter-monotonic allocations can also be optimal for quantile agents, who are neither risk-seeking nor risk-averse, as shown  by \cite{embrechts2018quantile, embrechts2020quantile} and generalized by \cite{weber2018solvency}. These observations suggest that a systemic study of risk sharing problems when constrained to counter-monotonic allocations is useful, and this is central to this paper.

Specifically, in this paper, we study optimal risk sharing and inf-convolution of distortion risk measures. We consider not only risk-averse agents, but also risk-seeking agents and behavioral agents. Our study can be  further motivated by a simple  
portfolio optimization problem. A fund collects an initial  constant endowment 
$W$ from a group of clients who have the same risk preferences. These clients can either be risk-seeking or risk-averse, and their risk attitudes are modeled using distortion risk measures $\rho_1,\dots,\rho_n$. The fund is managed by a professional manager whose task is to invest the endowment in a manner that aligns with the clients' risk attitudes and minimizes the total risk.
The manager intends to allocate a proportion $\lambda$
of the fund to a risky asset $X$ in a domain $\mathcal{X}$.
 In this context, Pareto optimality of an allocation $(X_1,\dots,X_n)$ is equivalent to  optimality with respect to the sum of the risk measures; see \citet[Proposition 1]{embrechts2018quantile}. 
 Therefore, the objective function of the fund manager is  given by 
 \begin{align}
     \label{eq:intro}
 \text{minimize}~~\sum_{i=1}^n \rho_{i}(-X_i) \mbox{~~subject to~}
 X_1+\dots + X_n=
 W+\lambda X-c(\lambda) \mbox{~and $c(\lambda)\le W$},\end{align}
where $c$ represents the corresponding cost function of investing $\lambda$ in the fund,
and hence  $W+\lambda X-c(\lambda)$ represents the terminal wealth of the fund, to be allocated to the participants. 
Note that there are two layers of  optimization involved in \eqref{eq:intro},
that is, deciding the investment strategy $\lambda $ and the allocation $(X_1,\dots,X_n)$.  
Investors interested in risky assets are typically not necessarily risk-averse, and hence \eqref{eq:intro} calls for a study of inf-convolution for non-risk-averse agents. 
As we will see in Section \ref{sec:port}, our results on risk sharing for distortion risk measures  will help to solve \eqref{eq:intro}. 
A natural intuition is that less will be invested in the  risky asset when risk-averse agents are involved, whereas more will be invested 
in the risky asset  when risk-seeking agents are involved.
This intuition checks out as justified by our results.



As mentioned above, we will focus on distortion risk measures, a popular class of risk measures widely used in finance and insurance; see \cite{mcneil2015quantitative} for distortion risk measures in risk management.
To make our notion more specialized, now we use $\rho_{h_1}, \dots, \rho_{h_n}$ for distortion risk measures and $h_1,\dots,h_n$ for their distortion functions.
We denote by  $\dsquare_{i=1}^{n}\rho_{h_{i}}(X)$,  $\dboxplus_{i=1}^{n}\rho_{h_{i}}(X)$
and $\dboxminus_{i=1}^{n}\rho_{h_{i}}(X)$
the smallest value of $\rho_{h_1} (X_1)+\dots+
\rho_{h_n}(X_n)$ over allocations of $X$ that are general, comonotonic, and counter-monotonic, respectively;  see Sections \ref{sec:pre} and \ref{sec:counter} for precise definitions.
 
The  existing literature primarily focuses on the unconstrained inf-convolution $\dsquare_{i=1}^{n}\rho_{h_{i}}$
and comontonic inf-convolution 
$\dboxplus_{i=1}^{n}\rho_{h_{i}}$.
The problem setting of counter-monotonic inf-convolution $\dboxminus_{i=1}^{n}\rho_{h_{i}}$ is novel.    For most results in our paper, we assume $h_1=\dots=h_n$, that is, the homogeneous setting, for tractability. 
We aim to answer the following questions:
\begin{enumerate}[(a)]
    \item  It is known that the comonotonic inf-convolution has an explicit form, i.e., $\dboxplus_{i=1}^{n}\rho_{h} = \rho_h$, which holds true regardless of whether $h$ is concave or convex; see  e.g., \citet[Proposition 5]{embrechts2018quantile}. 
   This naturally leads to the question of whether the counter-monotonic inf-convolution exhibits similar properties. Specifically, does $\dboxminus_{i=1}^{n}\rho_{h}$ share the same formula for both concave and convex $h$? Under which conditions does $\dboxminus_{i=1}^{n}\rho_{h}=\rho_h$ hold true?
    \item  We study three types of problems, i.e., unconstrained, comonotonic, and counter-monotonic risk sharing. Each type provides  insights into the behavior of risk measures under different allocation constraints. 
    What is the relationship among three variations of inf-convolution?
    \item  Question (a) concerns the counter-monotonic inf-convolution of concave or convex distortion risk measures. Is there an explicit formula for the counter-monotonic inf-convolution if the underlying risk measures are neither concave nor convex?
    \item  In the application of portfolio optimization problems, optimal asset allocations often have different structures for agents with varying risk attitudes. How will the optimal strategies change as agents' risk preferences vary, for example, becoming more risk-seeking? Additionally, how does the number of agents involved in the pool affect the optimal allocations?
\end{enumerate}

The paper is dedicated to answering the above four questions by offering general results on counter-monotonic inf-convolution. 
The rest of the paper is organized as follows. Sections \ref{sec:pre} and  \ref{sec:counter} contain preliminaries on risk measures and on risk sharing problems, respectively. In particular, Section \ref{sec:counter} introduces the new concept of counter-monotonic inf-convolution, along with some related discussions on negative dependence. In Sections \ref{sec:concave} and \ref{sec:convex}, we analyze the counter-monotonic risk sharing problem for risk-averse agents and risk-seeking agents, respectively. 
For risk-averse agents, all three forms of inf-convolution lead to the same 
optimal value (Theorem \ref{th:three_h}).
We also characterize conditions for the distortion function that yields equality between the value of the  inf-convolution and that of the risk measure (Theorems  \ref{theorem:counter_two}  and \ref{th:h_n}).
Based on the counter-monotonic improvement theorem in the literature (reported as Theorem \ref{theorem:counter_impro}),
explicit formulas for counter-monotonic inf-convolutions and optimal allocations are obtained for risk-seeking agents (Theorem \ref{theorem:n_convex}). 
Applying these results, we solve the portfolio optimization problem, demonstrating how decision-making varies with agents' risk attitudes, as detailed in Section \ref{sec:port}. In Section \ref{sec:inverse}, we consider agents with inverse S-shaped distortion functions, who are neither risk-averse nor risk-seeking, and obtain an explicit formula for the corresponding counter-monotonic inf-convolution (Theorem \ref{theorem:concave_convex}) under mild conditions.
Section \ref{sec:conclusion} concludes the paper.

\section{Preliminaries}\label{sec:pre}
\subsection{Risk measures}
 Fix  an atomless probability space $(\Omega, \mathcal{F}, \mathbb{P})$. Let $\mathcal{X}$ be a convex cone of random variables  on $(\Omega, \mathcal{F}, \mathbb{P})$, which will be specified in subsequent sections to several different choices.  
 For example, $\X$ may be the set $L^1$  of integrable random variables,
 the set $L^\infty$  of bounded random variables,
 or the set $L^+$ of nonnegative bounded random variables.
 Almost surely equal random variables are treated as identical. 
 We denote by $\mathbb{1}_A$  the indicator function for an event $A \in \mathcal{F}$. A risk measure is a mapping $\rho: \mathcal{X} \rightarrow[-\infty, \infty]$. 
Below we collect some standard properties for a risk measure $\rho$. For any $X, Y \in \mathcal{X}$, 
\begin{enumerate}[(a)]
    \item Monotonicity:  $\rho(X) \leqslant \rho(Y)$ if $X \leqslant Y$;
    \item Law-invariance:  $\rho(X)=\rho(Y)$ if $X$ and $Y$ have the same distribution, i.e., $X \stackrel{\mathrm{d}}{=} Y$; 
    \item  Positive homogeneity: $\rho(\lambda X)=\lambda \rho(X)$ for any $\lambda>0$;
    \item Translation invariance: $\rho (X + c) = \rho (X) + c $ for $c\in \mathbb{R}$, and $X+c\in \X$;
    \item Uniform continuity:\footnote{Continuity of $\rho$ is defined with respect to sup-norm, i.e. $\|X\|=\esssup (|X|)$ for $X \in \mathcal{X}$.}  for all $\varepsilon>0$ there exists $\delta>0$ such that for all $X, Y \in \mathcal{X}, \| X-Y|| \leqslant \delta$ implies $|\rho(X)-\rho(Y)| \leqslant \varepsilon$;
    \item Subadditivity: $\rho(X+Y) \leqslant \rho(X)+\rho(Y)$;
    \item Comonotonic additivity: $\rho(X+Y)=\rho(X)+\rho(Y)$ if $X$ and $Y$ are comonotonic;
    \item Convex order consistency: $\rho(X)\le \rho(Y)$ if $X\le_{\rm cx} Y$, where the inequality is the convex order, meaning $\E[u(X)]\le \E[u(Y)]$ for all convex functions $u$ such that the two expectations are well-defined.
\end{enumerate} 
A coherent risk measure (\cite{artzner1999coherent}) is one that satisfies (a), (c), (d) and (f). 

In (f) and (g), we require that $\rho$ do not take both values $-\infty$ and $\infty$. 
We allow for both $\infty$ and $-\infty$ in the range of risk measures because
we would like to treat inf-convolutions defined below as risk measures,
and  in some cases they can take infinite values.

We consider  $n$ agents, where $n$ is a positive integer, and denote by $[n]=\{1,\dots,n\}$.  
A random vector $(X, Y)$ is said to be comonotonic if
$(X(\omega)-X(\omega^{\prime}))(Y(\omega)-Y(\omega^{\prime})) \geqslant 0 $  for all $\omega, \omega^{\prime} \in \Omega$. 
Comonotonicity of $(X_{1},\ldots, X_{n})$ is equivalent to the existence of increasing functions $f_{i}:\mathbb{R}\mapsto \mathbb{R}$, $i\in[n]$, such that $X_{i}=f_{i}(\sum_{i=1}^nX_i )$
for $i\in [n]$. Terms like ``increasing" or ``decreasing" are in the non-strict sense. We refer to  \cite{dhaene2002concept} for an overview on comonotonicity. 
A pair $(X, Y)$ of random variables  is said to be counter-monotonic if $(X, -Y)$ is comonotonic, i.e., $(X(\omega)-X(\omega^{\prime}))(Y(\omega)-Y(\omega^{\prime})) \leqslant 0 $  for all $\omega, \omega^{\prime} \in \Omega$. 
A random vector $(X_{1},\ldots, X_{n})$ is comonotonic (resp.~counter-monotonic) if each pair of its components is comonotonic (resp.~counter-monotonic). 
Although comonotonicity for $n\ge 3$ is a straightforward extension of the case $n=2$ and well understood,
counter-monotonicity for $n\ge 3$ is highly restrictive on the marginal distributions, and quite different from the case $n=2$.
A stochastic representation of  counter-monotonicity for $n\ge 3$ is provided in \cite{lauzier2023pairwise}, which we present in Proposition \ref{prop:counter} below. 
To emphasize this difference, we sometimes mention ``pairwise counter-monotonicity" for counter-monotonicity in dimension $3$ or higher.

We will model agents' preferences by 
the class of distortion risk measures.
Equivalently, these agents are modelled by the dual utility of \cite{yaari1987dual}, with minimization problems switched to maximization problems. 
To analyze the risk sharing problems, it will be convenient to work with 
a more general class than distortion risk measures, called \emph{distortion riskmetrics} by \cite{wang2020distortion}, which we explain below.
Let $$\H^{\rm BV}=\{h:  [0,1]\to \R \mid  h \,\text{is of bounded variation } h(0)=0 \}.$$
For $h\in \H^{\rm BV} $,
we first define the Choquet integral for a random variable $X$ by
\begin{align*}
\int X \d\left( h\circ \p \right)=\int_0^\infty h(\p(X > x))\d x + \int_{-\infty}^{0} (h(\p(X > x))-h(1) )\d x, 
\end{align*} 
provided that the above is well-defined. 
A distortion riskmetric $\rho_h: \mathcal{X} \rightarrow\mathbb{R}$ is $\rho_h(X)=\int X \d\left( h\circ \p \right)$, where $\X$ is such that the Choquet integral is well-defined. 
Elements of $\H^{\rm BV}$ are called distortion functions, and they are not necessarily monotone.
Denote by  $$\H=\{h:  [0,1]\to [0,1]\mid  h \,\text{is increasing and } h(0)=1-h(1)=0 \},$$ 
which is a subset of $\H^{\rm BV}$. 
In case of $h\in\H $ , the distortion riskmetric $\rho_h$ is a distortion risk measure, which satisfies properties of law-invariance, monotonicity, translation invariance,   positive homogeneity, and comonotonic additivity. 

Characterization and various properties of distortion riskmetrics have been studied on $L^{\infty}$ by \cite{wang2020characterization} and on more general spaces by \cite{wang2020distortion}.
The following are equivalent for $\rho_h$ (see \citet[Theorem 3]{wang2020characterization}): (i) $h$ is concave; (ii) $\rho_h$ is subadditive; (iii) $\rho_h$ is convex; (iv) $\rho_h$ is convex order consistent. 
Moreover, comonotonic additivity and law-invariance plus some continuity characterize the class of $\rho_h$ for $h\in \H^{\rm BV}$.
Many popular risk measures belong to the family of distortion risk measures, 
including the 
regulatory risk measures in banking and insurance, 
the Value-at-Risk (VaR)
and the Expected Shortfall (ES, also known as CVaR and TVaR). 
 
Below we define VaR and ES, with a slight generalization on the domain of the VaR parameter, as done in \cite{embrechts2018quantile}. This generalization of the domain allows some results to be more concise.  
For a random variable $X$, VaR   at level $\alpha \in \mathbb{R}^{+}:=[0, \infty)$ is defined as 
\begin{align}\label{eq:varrr}
    \operatorname{VaR}_\alpha(X)=\inf \{x \in[-\infty, \infty]: \mathbb{P}(X \leqslant x) \geqslant 1-\alpha\} ,
\end{align}
and   ES  at level $\beta \in \mathbb{R}^{+}:=[0, 1)$ is defined as
\begin{align*}
    \operatorname{ES}_\beta(X)=\frac{1}{\beta} \int_0^\beta \operatorname{VaR}_\gamma(X) \mathrm{d} \gamma,
\end{align*}
where $\operatorname{VaR}_\gamma$ is defined in (\ref{eq:varrr}).
Here we use the convention of ``small $\alpha$" as in \cite{embrechts2018quantile}.
If $\alpha \in [0,1)$, 
  $\operatorname{VaR}_{\alpha}$ and $\operatorname{ES}_{\alpha}$ 
are distortion risk measures. They are well-defined on the set of all random variables, and they 
are associated with the distortion functions $h(t)=\id_{\{t>\alpha\}}$ and $ h(t)=\min \{t/\alpha, 1\}$, respectively.

\subsection{Risk sharing and inf-convolution}
\noindent In risk management and game theory, the concept of risk sharing, often referred to as risk allocation, involves distributing the aggregate risk or wealth among multiple agents. A common approach to optimally distribute the risk is by minimizing the overall value of the aggregate risk. 
We assume there are $n$ agents sharing a total loss $X \in \mathcal{X}$ in the market. Suppose that agent $i \in[n]$ has a risk preference modelled by a  risk measure $\rho_{i}$. Given $X \in \mathcal{X}$, we define the set of allocations of $X$ as
\begin{align}\label{def:allo}
    \mathbb{A}_n(X)=\left\{\left(X_1, \dots, X_n\right) \in \mathcal{X}^n: \sum_{i=1}^n X_i=X\right\}.
\end{align}
The allocation set consists of all potential ways to distribute the total risk $X$ among $n$ agents and the associated aggregate risk value is $\sum_{i=1}^n \rho_{i}(X_i)$. 
Note that the definition of allocations crucially depends on the specification of $\X$, which will vary across different applications in the later sections. 

Using \eqref{def:allo}, the inf-convolution $\square_{i=1}^n \rho_{i}$ of $n$   risk measures $\rho_{1}, \dots, \rho_{n}$ is defined as
\begin{align}
    \label{eq:inf}
    \dsquare_{i=1}^n \rho_{i}(X):=\inf \left\{\sum_{i=1}^n \rho_{i}\left(X_i\right):\left(X_1, \dots, X_n\right) \in \mathbb{A}_n(X)\right\}, \quad X \in \mathcal{X}.
\end{align}
That is, the inf-convolution of $n$ risk measures is the infimum over aggregate risk values for all possible allocations.
Like $\mathbb A_n(X)$, 
we make the reliance of $\dsquare_{i=1}^n \rho_{i}$ on $\X$ implicit, but it is useful to keep in mind that $\X$ matters in this definition; that is, for $\X=L^\infty$ and $\X=L^+$, the inf-convolution may differ for the same set of risk measures well-defined on $L^\infty$.

An allocation $(X_1, \dots, X_n)$ is sum-optimal in $\mathbb{A}_n(X)$ if $\square_{i=1}^n \rho_{i}(X)=\sum_{i=1}^n \rho_{i} (X_i)$, i.e., it attains the best total risk value. An allocation $ (X_1, \dots, X_n ) \in \mathbb{A}_n(X)$ is Pareto optimal in $\mathbb{A}_n(X)$ if for any $(Y_1, \dots, Y_n ) \in \mathbb{A}_n(X)$ satisfying $\rho_{i}(Y_i) \leq \rho_{i}(X_i)$ for all $i \in[n]$, we have $\rho_{i}\left(Y_i\right)=\rho_{i}\left(X_i\right)$ for all $i \in[n]$. Pareto optimality means that the allocation cannot be improved for all agents with one agent being strictly improved.

It is well-known that for finite monetary risk measures, which satisfies monotonicity and translation-invariance, Pareto optimality is equivalent to sum-optimality; see  \citet[Proposition 1]{embrechts2018quantile}.
Since we focus on monetary risk measures (particularly, distortion risk measures) in this paper,
we will simply say that an allocation is optimal if sum-optimality holds. 

\section{Comonotonic and counter-monotonic risk sharing}
\label{sec:counter}

As  in other multivariate models in risk management, the 
dependence structure among the components of the risk allocation
is important for interpreting the economic incentives created by the allocation. 
For instance, if the allocation is comonotonic, then all agents receive gains and losses together; if the allocation is counter-monotonic, then 
all agents are essentially gambling against each other.
In some situations, e.g., in an insurance setting, it may be preferred or mandatory to allocate the aggregate risk in a comonotonic way among agents, who are insureds and insurers in this context. 
On the other hand, in a lottery setting, the allocation of payoffs may be counter-monotonic. 

We will consider risk sharing problems constrained to comonotonic or counter-monotonic allocations.
The comonotonic risk sharing problem is well studied in the risk management and insurance literature; see e.g., \cite{jouini2008optimal}, \cite{cui2013optimal}, and \cite{boonen2021competitive}. 
The counter-monotonic problem would be the focus of our paper, though we will compare these three types of risk sharing problems throughout. The set of comonotonic allocations is defined as
\begin{align*}
    \mathbb{A}_n^{+}(X)=\left\{\left(X_1, \ldots, X_n\right) \in \mathbb{A}_n(X): X_1, \ldots, X_n\right. \text{are comonotonic} \}.
\end{align*} 
The corresponding comonotonic inf-convolution $\boxplus_{i=1}^n \rho_{i}$ of risk measures $\rho_{1}, \dots, \rho_{n}$ is defined as
$$
\underset{i=1}{\stackrel{n}{\boxplus}} \rho_{i}(X):=\inf \left\{\sum_{i=1}^n \rho_{i}\left(X_i\right):\left(X_1, \ldots, X_n\right) \in \mathbb{A}_n^{+}(X)\right\}.
$$
An allocation $\left(X_1, \ldots, X_n\right) \in \mathbb{A}_n^{+}(X)$ is called an optimal allocation of $X$ for $\left(\rho_{1}, \ldots, \rho_{n}\right)$ within $ \mathbb{A}_n^{+}$ 
if $\sum_{i=1}^n \rho_{i}\left(X_i\right)=\boxplus_{i=1}^n \rho_{i}(X)$.  
By definition, it is clear that $\square_{i=1}^n \rho_{i}(X) \leqslant \boxplus_{i=1}^{n}\rho_{i}(X)$. Hence, if an optimal allocation of $X$ is comonotonic, then it is also an optimal   allocation within  $ \mathbb{A}_n^{+}$, and further we have $\square_{i=1}^n \rho_{i}(X)=\boxplus_{i=1}^n \rho_{i}(X)$. For law-invariant convex risk measures on $L^{\infty}$, optimal constrained allocations are also optimal allocations; see \cite{jouini2008optimal}. This statement remains true if the underlying risk measures preserve convex order, and this is based on the comonotone improvement of \cite{landsberger1994co}. 

Our new invention is the risk sharing framework where allocations are restricted in the set of counter-monotonic   allocations, which can be rigorously formulated  below, by
\begin{align*}
    \mathbb{A}_{n}^{-}(X) = \left\{(X_{1}, \ldots, X_{n})\in \mathbb{A}_{n}(X): X_{1}, \ldots, X_{n} \, \text{are counter-monotonic} \right\}.
\end{align*}
The corresponding counter-monotonic inf-convolution $ \dboxminus_{i=1}^{n} \rho_{i}$ is thus defined as 
\begin{align*}
    \dboxminus_{i=1}^{n} \rho_{i}(X) = \inf \left\{\sum_{i=1}^{n}\rho_{i}(X_{i}): (X_{1}, \ldots, X_{n}) \in \mathbb{A}_{n}^{-}(X) \right\}.
\end{align*} 
Similarly, an allocation $\left(X_1, \ldots, X_n\right) \in \mathbb{A}_n^{-}(X)$ is called an optimal  allocation of $X$ within $\mathbb A^{-}_n$ if $\sum_{i=1}^n \rho_{i}\left(X_i\right)=\dboxminus_{i=1}^n \rho_{i}(X)$. It is clear that $\square_{i=1}^n \rho_{i}(X) \leqslant \dboxminus_{i=1}^{n}\rho_{i}(X)$. In contrast to the rich literature on comonotonic risk sharing, research on counter-monotonic risk sharing problem is quite limited. Counter-monotonic allocations  have been recently explored by \cite{lauzier2024negatively}, but with a different approach than ours. More precisely, instead of working with  
$ \mathbb{A}_{n}^{-}(X)$ directly, \cite{lauzier2024negatively} considered conditions under which the original problem within $\mathbb A_n$ has an optimal solution within $ \mathbb{A}_{n}^{-}$.

Before we analyze the counter-monotonic risk sharing problem, we first recall the stochastic representation of counter-monotonicity given by \cite{lauzier2024negatively}, which will be useful throughout our analysis. 
\begin{proposition}[\cite{lauzier2024negatively}]\label{prop:counter}
    For $X \in \mathcal{X}$ and $n\ge 3$, suppose that at least three of $\left(X_1, \ldots, X_n\right) \in \mathbb{A}_n(X)$ are non-degenerate. Then $\left(X_1, \ldots, X_n\right)$ is counter-monotonic if and only if there exist constants $m_1, \ldots, m_n$ and $\left(A_1, \ldots, A_n\right) \in \Pi_n$ such that
\begin{align}\label{eq:counter_form1}
    X_i=(X-m) \mathbb{1}_{A_i}+m_i \quad \text { for all } i \in[n] \text { with } m=\sum_{i=1}^n m_i \leq \operatorname{ess-inf} X
\end{align}
or
\begin{align}\label{eq:counter_form2}
    X_i=(X-m) \mathbb{1}_{A_i}+m_i \quad \text { for all } i \in[n] \text { with } m=\sum_{i=1}^n m_i \geq \operatorname{ess-sup} X.
\end{align}
\end{proposition}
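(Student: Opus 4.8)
The two directions are of very different difficulty, so I would handle them separately, and the sufficiency is a short direct check.

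For the \emph{if} direction, fix the representation with partition $(A_1,\dots,A_n)\in\Pi_n$, constants $m_1,\dots,m_n$ and $m=\sum_{i=1}^n m_i$, and verify the defining inequality $(X_i(\omega)-X_i(\omega'))(X_j(\omega)-X_j(\omega'))\le 0$ for each pair $i\ne j$ by splitting on which blocks $\omega,\omega'$ lie in. In form \eqref{eq:counter_form1} one has $X-m\ge 0$ a.s., so $X_i=X-m+m_i\ge m_i$ on $A_i$ while $X_i=m_i$ off $A_i$; since the blocks are disjoint, at any point where $X_i$ strictly exceeds its floor $m_i$ one has $\omega\in A_i$, forcing $X_j=m_j$ for every $j\ne i$, and symmetrically. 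Checking the cases ($\omega,\omega'$ in the same block, in two different blocks $A_i,A_j$, or one of them outside $\bigcup_k A_k$) makes the product nonpositive in each. Form \eqref{eq:counter_form2} is identical after replacing $X-m\ge 0$ by $X-m\le 0$ and infima by suprema.

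The \emph{only if} direction carries the real content and rests on the rigidity of pairwise counter-monotonicity once three margins are non-degenerate. The plan is to prove a dichotomy for any three pairwise counter-monotonic, non-degenerate variables $U,V,W$: \emph{either} the upward-deviation events $\{U>\operatorname{ess-inf} U\}$, $\{V>\operatorname{ess-inf} V\}$, $\{W>\operatorname{ess-inf} W\}$ are pairwise disjoint up to null sets, \emph{or} the downward-deviation events $\{U<\operatorname{ess-sup} U\}$, $\{V<\operatorname{ess-sup} V\}$, $\{W<\operatorname{ess-sup} W\}$ are. The mechanism is that on a positive-probability event where two of the variables, say $U$ and $V$, are simultaneously strictly interior to their ranges, counter-monotonicity of the pair $(U,V)$ links them decreasingly; then $W$, being counter-monotonic with both $U$ and $V$, would have to be simultaneously decreasing in $U$ and, through $V$, increasing in $U$, so $W$ must be a.s. constant there. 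Since $W$ is non-degenerate its deviation must occur elsewhere, and iterating this over the three pairs separates the deviation regions and selects a single global regime (infimum-anchored versus supremum-anchored).

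Granting the lemma and, say, the infimum regime, I would then pass from three to $n$ components. Any three non-degenerate coordinates obey the dichotomy; after arguing that a single regime can be chosen consistently for the whole family, set $m_i=\operatorname{ess-inf} X_i$ and $A_i=\{X_i>m_i\}$. The lemma makes the $A_i$ pairwise disjoint; on $A_i$ every other coordinate lies at its essential infimum $m_j$ (being outside $A_j$), so $\sum_k X_k=X$ gives $X_i=X-\sum_{j\ne i}m_j=X-m+m_i$ there, while $X_i=m_i$ off $A_i$, which is exactly $X_i=(X-m)\mathbb{1}_{A_i}+m_i$. Degenerate coordinates simply take their constant value with $A_i=\emptyset$. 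Enlarging the blocks to a genuine partition in $\Pi_n$ by assigning the residual event $\Omega\setminus\bigcup_k A_k=\{X=m\}$ to any single block changes nothing, since $(X-m)\mathbb{1}_{A_i}$ vanishes there. Finally the inequality $m=\sum_i m_i\le\operatorname{ess-inf} X$ is automatic, as $X=\sum_i X_i\ge\sum_i\operatorname{ess-inf} X_i=m$ a.s.; the supremum regime yields \eqref{eq:counter_form2} by the symmetric construction.

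The main obstacle is the dichotomy lemma, and specifically the step asserting that a positive-probability simultaneous-interior overlap of two margins forces the third to be constant, which must be made rigorous with essential infima and suprema and null-set care rather than the heuristic ``decreasing in $U$ and increasing in $U$'' argument, together with excluding the mixed configurations and reconciling the per-triple regimes into one global regime (including the degenerate and two-valued boundary cases). Everything after the lemma --- disjointness, the values off the blocks, the residual set, and the automatic inequality --- is routine.
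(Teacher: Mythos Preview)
The paper does not prove this proposition: it is stated as a recalled result from \cite{lauzier2023pairwise} (the attribution in the paper's text to \cite{lauzier2024negatively} is a slip; the preceding paragraph credits the stochastic representation to \cite{lauzier2023pairwise}), and no argument is given in the present paper. So there is nothing to compare against here.

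That said, your plan is essentially the one used in the original reference. The sufficiency check is routine, as you say. For necessity, the core observation in \cite{lauzier2023pairwise} is precisely the dichotomy you isolate: with at least three non-degenerate components, pairwise counter-monotonicity forces either the ``above ess-inf'' events or the ``below ess-sup'' events to be pairwise disjoint, after which setting $m_i$ to the relevant extremum and $A_i$ to the deviation event reconstructs the stated form. Your identification of the delicate point is accurate --- the heuristic ``$W$ would be both increasing and decreasing in $U$'' needs to be replaced by a careful argument at the level of essential ranges, and one must also show that the per-triple regime is globally consistent across all non-degenerate coordinates (this is not automatic from the three-variable lemma alone and is where a bit of care is required). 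The remaining bookkeeping --- absorbing the residual set $\{X=m\}$ into one block to obtain a genuine partition, handling degenerate coordinates with $A_i=\emptyset$, and reading off $m\le\operatorname{ess-inf} X$ from $X=\sum_i X_i\ge\sum_i m_i$ --- is exactly as you describe.
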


By taking $m_{1}=\ldots=m_{n}=0$, i.e., $X \geq 0$ or $X \leq 0$, a simple counter-monotonic allocation in the form of (\ref{eq:counter_form1}) and (\ref{eq:counter_form2}) is given by
\begin{align*}
    X_i=X \mathbb{1}_{A_i} \, \text{for each} \, i\in [n] , \, \text{where}\, \left(A_1, \ldots, A_n\right) \in \Pi_n
\end{align*}
Specifically, such an allocation is called a \textit{jackpot allocation} if $X \geq 0$ and a \textit{scapegoat allocation} if $X \leq 0$ by \cite{lauzier2024negatively}.
It is clear to see that there is a ``winner-takes-all" structure in a jackpot allocation and a ``loser-loses-all"  structure in a scapegoat allocation.
In real life, 
such $\left(X_1, \ldots, X_n\right)$ may represent the outcome of $n$ lottery tickets, exactly one of which wins a random reward of $X$. ``Roulette wheel decisions''\footnote{For tasks no one wants to do, a roulette wheel with everyone's names on it can be spun to determine who has to undertake the task.} is an example for scapegoat allocations.

A special form of jackpot and scapegoat allocations will be useful in our analysis. 
The allocation $(X_1,\dots,X_n)$
of $X$ with
 either $X\ge 0$ or $X\le 0$ 
is called a \emph{uniform counter-monotonic allocation} if 
$X_{i}=X\mathbb{1}_{A_{i}}$ with $\mathbb{P}(A_{i})=1/n$ for each $i\in[n]$, where $(A_{1}, \ldots, A_{n})\in \Pi_{n}$ is independent of $X$. 
For any $X\ge 0$ or $X\le0$, a uniform counter-monotonic alloction exists as soon as there exists a (nondegenerate) uniform random variable independent of $X$.

Both the set of comonotonic allocations $\mathbb{A}_{n}^{+}(X)$ and the set of counter-monotonic allocations $\mathbb{A}_{n}^{-}(X)$ are strict subsets of the set of all potential allocations $\mathbb{A}_{n}(X)$. Hence, the sequel refers to the problem of sharing risk in 
$\mathbb{A}_{n}(X)$, 
$\mathbb{A}_{n}^{+}(X)$ and $\mathbb{A}_{n}^{-}(X)$ as unconstrained,  comonotonic and counter-monotonic risk sharing, respectively. 
A first and natural question is whether 
$\dboxplus_{i=1}^n \rho_{i}(X) $
and $ \dboxminus_{i=1}^{n}\rho_{i}(X)$ admit a clear relation in terms of their values. 
In the next few sections, we will answer this question and several others mentioned in the Introduction. 
 
Although we have introduced the framework of counter-monotonic risk sharing with possibly different risk measures for each agent, 
in the rest of the paper, we will always consider the case of homogeneous agents; that is, all agents have the same risk measure, and it is  a distortion risk measure. 
This homogeneous setting, although being restrictive, already gives rise to interesting mathematical results. 
We leave the heterogeneous case, as well as other decision models, for future study.

\section{Risk-averse agents and pseudo-concave capacities}\label{sec:concave}

In this section, we take $\mathcal{X}$ as the set $L^\infty$ of bounded random variables.
Assume that all agents use the same distortion risk measure $\rho_h$. We first present a general result that holds for all types of distortion functions $h$, in which
a clear ordering among the unconstrained, comonotonic and counter-monotonic inf-convolutions is provided, and they become equivalent when $h$ is concave.

\begin{theorem}\label{th:three_h}
    For $h\in\mathcal{H}$, the following  always holds:    \begin{align}\label{ineq:three_conv}
        \dsquare_{i=1}^{n}\rho_{h} \leq \dboxminus_{i=1}^{n} \rho_{h} \leq \dboxplus_{i=1}^{n}\rho_{h} = \rho_{h}.
    \end{align}
    Moreover, if $h$ is concave, then 
    \begin{align}\label{ineq:three_conv2} \dsquare_{i=1}^{n}\rho_{h} = \dboxminus_{i=1}^{n} \rho_{h}  = \dboxplus_{i=1}^{n}\rho_{h} = \rho_{h}.
    \end{align}
    
\end{theorem}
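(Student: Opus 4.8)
The plan is to prove the inequality chain \eqref{ineq:three_conv} first, and then to obtain the concave case \eqref{ineq:three_conv2} by adding a single reverse inequality and sandwiching. Two elementary facts will be used repeatedly: that $\rho_h(0)=0$ (since $h(0)=0$ and $h(1)=1$ force the Choquet integral of a constant to vanish), and that a constant random variable is simultaneously comonotonic and counter-monotonic with every random variable.

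For the chain, the two outer relations are almost immediate. Since $\mathbb{A}_n^{-}(X)\subseteq\mathbb{A}_n(X)$, taking the infimum over the smaller set can only enlarge it, which gives $\dsquare_{i=1}^{n}\rho_h(X)\le\dboxminus_{i=1}^{n}\rho_h(X)$. For the rightmost equality I would invoke comonotonic additivity of $\rho_h$ (valid for every $h\in\H$): for any comonotonic allocation $(X_1,\dots,X_n)$ of $X$ one has $\sum_{i=1}^{n}\rho_h(X_i)=\rho_h(\sum_{i=1}^{n}X_i)=\rho_h(X)$, so every comonotonic allocation yields the same total risk and hence $\dboxplus_{i=1}^{n}\rho_h(X)=\rho_h(X)$; this is also \citet[Proposition 5]{embrechts2018quantile}.

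The key step is the middle inequality $\dboxminus_{i=1}^{n}\rho_h\le\dboxplus_{i=1}^{n}\rho_h$, which does \emph{not} follow from set inclusion, as neither allocation class contains the other. Here I would exhibit one admissible counter-monotonic allocation that already attains $\rho_h(X)$, namely $(X,0,\dots,0)$. Because each pair of its components involves at least one constant, every pair is counter-monotonic in the pairwise sense defining $\mathbb{A}_n^{-}$, so $(X,0,\dots,0)\in\mathbb{A}_n^{-}(X)$; its total risk is $\rho_h(X)+(n-1)\rho_h(0)=\rho_h(X)$. Therefore $\dboxminus_{i=1}^{n}\rho_h(X)\le\rho_h(X)=\dboxplus_{i=1}^{n}\rho_h(X)$, completing \eqref{ineq:three_conv}.

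For the concave case it remains only to prove the reverse of the leftmost inequality, i.e.\ $\dsquare_{i=1}^{n}\rho_h(X)\ge\rho_h(X)$. When $h$ is concave, $\rho_h$ is subadditive by the equivalence recorded from \citet[Theorem 3]{wang2020characterization}, so by induction $\rho_h(X)=\rho_h(\sum_{i=1}^{n}X_i)\le\sum_{i=1}^{n}\rho_h(X_i)$ for every $(X_1,\dots,X_n)\in\mathbb{A}_n(X)$; taking the infimum gives $\rho_h(X)\le\dsquare_{i=1}^{n}\rho_h(X)$. Combined with $\dsquare_{i=1}^{n}\rho_h\le\dboxminus_{i=1}^{n}\rho_h\le\dboxplus_{i=1}^{n}\rho_h=\rho_h$ from the first part, all four quantities coincide, which is \eqref{ineq:three_conv2}. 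The only genuinely non-routine point is recognizing and verifying that the trivial allocation $(X,0,\dots,0)$ is counter-monotonic; everything else is bookkeeping with subadditivity and comonotonic additivity.
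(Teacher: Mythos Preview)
Your proof is correct and, for the inequality chain \eqref{ineq:three_conv}, follows exactly the paper's argument: set inclusion for the first inequality, the trivial allocation $(X,0,\dots,0)$ for the middle one, and comonotonic additivity for the last equality.

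For the concave case \eqref{ineq:three_conv2} you take a slightly different route. The paper closes the chain by invoking the comonotonic improvement theorem of \cite{landsberger1994co} (together with convex-order consistency of $\rho_h$ for concave $h$) to conclude $\dsquare_{i=1}^{n}\rho_{h}=\dboxplus_{i=1}^{n}\rho_{h}$. You instead use subadditivity of $\rho_h$ directly to obtain $\rho_h(X)\le\sum_{i=1}^n\rho_h(X_i)$ for every allocation, hence $\rho_h\le\dsquare_{i=1}^{n}\rho_h$, and then sandwich. Your argument is more elementary and self-contained, since it avoids the external comonotonic-improvement machinery and relies only on the equivalence ``$h$ concave $\Leftrightarrow$ $\rho_h$ subadditive'' already recorded in the paper. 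The paper's route, on the other hand, makes the connection to comonotonic improvement explicit, which is thematically useful given the paper's later focus on the counter-monotonic analogue.
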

A proof of Theorem \ref{th:three_h} is straightforward. 
For the first inequality,  
we have seen above that the unconstrained inf-convolution is the smallest since there are no constraints on the allocation set. This relation also holds for agents with other risk preferences than distortion risk measures.
The second inequality holds by taking allocation $(X_{1}, \dots, X_{n})=(X, 0, \ldots, 0)$, that is both counter-monotonic and comonotonic, thus leading to $\boxminus_{i=1}^{n} \rho_{h} \leq \rho_{h}$
and $\boxplus_{i=1}^{n} \rho_{h} \leq \rho_{h}$.
The last equality in \eqref{ineq:three_conv} follows directly from comonotonic additivity of 
$\rho_h$.  
To show \eqref{ineq:three_conv2}, it suffices to note that 
$\dsquare_{i=1}^{n}\rho_{h}  = \dboxplus_{i=1}^{n}\rho_{h} $
for concave $h$, which follows from the  comonotonic improvement of \cite{landsberger1994co}; see also Proposition 5 of \cite{embrechts2018quantile}. 

The general chains of relations in \eqref{ineq:three_conv} 
and \eqref{ineq:three_conv2}
 do not necessarily generalize to  risk measures that are not comonotonic additive,   or to heterogeneous risk preferences among agents.  

In the following example, we compare the three inf-convolutions in (\ref{ineq:three_conv}) for risk measures being VaR.
\begin{example}[VaR]\label{ex:var}
    We now analyze the relationship among counter-monotonic, comonotonic and  unconstrained optimal allocations in the risk sharing problem for VaR agents with levels $\alpha_{1}, \dots, \alpha_{n}, \alpha \in (0,1)$. Here, we assume $\alpha_{i}=\alpha$ for each $i\in [n]$. 
    In particular, the values of $\dsquare_{i=1}^n \VaR_{\alpha}(X)$ and $\dboxplus_{i=1}^n \VaR_{\alpha}(X)$ are given by Theorem 2 and Proposition 5 of \cite{embrechts2018quantile}. They yield \begin{align}\label{eq:var}
        \dsquare_{i=1}^n \VaR_{\alpha}(X)
        =\operatorname{VaR}_{n \alpha}(X) \leq \dboxplus_{i=1}^n \VaR_{\alpha}(X) = \operatorname{VaR}_{\alpha}(X) .
    \end{align}
    Moreover, if $\alpha < 1/n$, by \citet[Theorem 2]{embrechts2018quantile},  there exists a Pareto-optimal allocation $(X_{1}, \ldots, X_{n})$ of $X$ with the form of 
    \begin{align*}
        X_i=(X-m) \mathbb{1}_{A_i}, i \in[n-1] \quad \text { and } \quad X_n=(X-m) \mathbb{1}_{A_n}+m
    \end{align*}
    for some $(A_1, \ldots, A_n)\in \Pi_{n}$, 
    where $ m\in (-\infty, \operatorname{VaR}_{n \alpha}(X)]$ is a constant.  By taking $ m = \essinf X$, $(X_{1}, \ldots, X_{n})$  is counter-monotonic, as noticed by 
    \citet[Theorem 1]{lauzier2023pairwise}. Hence, $\boxminus_{i=1}^n \VaR_{\alpha}(X) = \dsquare_{i=1}^n \VaR_{\alpha}(X)$, and  thus the inf-convolution of several VaRs is equivalent to the counter-monotonic one. 
    Therefore, for any $X \in \mathcal{X}$, the following always holds true
    \begin{align}\label{ineq:relation_var}
        \dboxminus_{i=1}^n \VaR_{\alpha}(X)= \dsquare_{i=1}^n \VaR_{\alpha}(X) \leq \dboxplus_{i=1}^n \VaR_{\alpha}(X)
    \end{align}
    and generally the inequality in (\ref{ineq:relation_var}) is not an equality.
    Furthermore, we can analogously show that the result (\ref{ineq:relation_var}) also holds true for RVaR agents.  
    It is notable that the distortion risk function corresponding to VaR is neither concave nor convex, which implies that concavity or convexity of $h$ is not necessary for $\dboxminus_{i=1}^n \rho_{h}$ being equivalent to $\dsquare_{i=1}^n \rho_{h}$,
    although concavity is sufficient as we see in \eqref{ineq:three_conv2}.
\end{example}

\begin{example}[ES] 
It is well known that ES has a concave distortion function, and hence,    by Theorem \ref{th:three_h}, 
$
      \dsquare_{i=1}^n \ES_{\beta} =   \dboxminus_{i=1}^n \ES_{\beta}= 
 \dboxplus_{i=1}^n \ES_{\beta} = \ES_{\beta}. $ 
\end{example}

Without assuming that $h$ is concave, 
solving the counter-monotonic inf-convolution poses a challenge, as is the case of the unconstrained  problem. 
Following Theorem 1 of \cite{aouani2021propensity}, which shows that  counter-monotonic supperadditive Choquet functional can be characterized by pseudo-convex capacity, we provide some necessary and sufficient conditions for the equality $\boxminus_{i=1}^{n} \rho_{h}= \rho_{h}$.
We first review some useful concepts in the theory of capacity.  

A set function $\nu: \mathcal{F}\mapsto \mathbb{R}$ is called a \emph{capacity} if $\nu(\emptyset)=0$ and $\nu$ is \emph{monotone}, i.e., $A \subseteq B \Rightarrow v(A) \leq v(B)$. 
Moreover,
$\nu$ is \emph{normalized} if $\nu(\Omega)=1$.
In particular, a capacity $\nu$ on $\mathcal{F}$ is called 
\begin{itemize}
        \item[(i)] \emph{concave} if \begin{equation}\label{eq:concave}\nu(A \cup B) + \nu(A \cap B) \leq  \nu(A) + \nu(B)
        \end{equation} 
        holds for all $A, B \in \mathcal{F}$;
        \item[(ii)] \emph{subadditive} if \eqref{eq:concave} holds for all disjoint $A, B \in \mathcal{F}$;

    \item[(iii)] 
    \emph{concave at the sure event} if \eqref{eq:concave} holds for all $A, B \in \mathcal{F}$ with $A \cup B = \Omega$;
    \item[(iv)]  \emph{pseudo-concave} if it is subadditive and concave at the sure event.
\end{itemize}
In the literature, concavity (convexity) of $\nu$ is also called submodularity (supremodularity). 
Clearly, concavity is stronger than the other three properties. 

Let  $\nu$ be a capacity.
The Choquet integral of a bounded random variable $X$  with respect to $\nu$ is defined as 
$$
\int X \mathrm{~d} \nu=\int_0^{\infty} \nu(X>x) \mathrm{d} x+\int_{-\infty}^0(\nu(X>x)-\nu(\Omega)) \mathrm{d} x
$$ 
The distortion risk measure $\rho_h$ is an increasing Choquet integral with $\nu = h\circ \mathbb{P}$ where $\nu$ is normalized. The properties of $\nu$ are  closely related to those of $h$. For instance, concavity of $\nu $ is equivalent to concavity of $h$; see \cite{follmer2011stochastic} or \cite{marinacci2004introduction}.
Next, we introduce a property for $h\in \mathcal H$, which we refer to as dual subadditivity, that turns out to correspond to pseudo-concavity of $\nu$.

\begin{definition}
Let $h\in \mathcal H$.
\begin{enumerate}[(i)]
    \item The \emph{dual function} of $h$ is defined by 
    $\tilde h(t) =1-h(1-t)$ for $t\in [0,1]$. 
    It is clear that $\tilde h
    \in \mathcal H$. 
    \item  The function
    $h$ is \emph{dually subadditive} if  $h$  satisfies subadditivity (i.e., $h(x+y)\leq h(x) + h(y)$ for $x, y\in [0,1]$ with $x+y \leq 1$)
    and  $\tilde{h}(x)$  
    is superadditive
    (i.e., $\tilde{h}(x+y)\geq \tilde{h}(x) + \tilde{h}(y)$ for $x, y\in [0,1]$ with $x+y \leq 1$).
    \end{enumerate}
\end{definition}
Note that $h$ is concave if and only $\tilde h$ is convex.
It is straightforward to check that any concave function $h\in \mathcal H$ is dually subadditive.  
The following theorem gives a necessary and sufficient condition for  $\rho_h \boxminus \rho_h=\rho_h$.

\begin{theorem}\label{theorem:counter_two}
    Suppose $h\in \mathcal{H}$ and denote by $\nu = h \circ \mathbb{P}$. 
    The following are equivalent.
    \begin{itemize}
        \item[(i)]  $\rho_{h} \boxminus \rho_{h}(X) = \rho_{h}(X)$ holds for all $X\in \mathcal{X}$.
        \item[(ii)] $h$ satisfies dual subadditivity.
        \item[(iii)] $\nu$ is pseudo-concave. 
    \end{itemize}
     
\end{theorem}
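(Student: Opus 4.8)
The plan is to first reduce statement (i) to a structural property of $\rho_h$, then prove (ii)$\Leftrightarrow$(iii) by a direct computation on the distortion function, and finally establish (i)$\Leftrightarrow$(iii) by combining a sign/duality argument with the existing characterization of counter-monotonic additivity. Note that the trivial allocation $(X,0)$ is simultaneously comonotonic and counter-monotonic, so $\rho_h\boxminus\rho_h(X)\le\rho_h(X)$ holds for every $X$; consequently (i) is equivalent to $\rho_h\boxminus\rho_h(X)\ge\rho_h(X)$ for all $X$, i.e.\ to the \emph{counter-monotonic subadditivity} of $\rho_h$:
\begin{align*}
\rho_h(X_1+X_2)\le \rho_h(X_1)+\rho_h(X_2)\quad\text{for every counter-monotonic }(X_1,X_2).
\end{align*}
This reformulation is the object I will characterize.

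For (ii)$\Leftrightarrow$(iii) I would exploit atomlessness. For disjoint $A,B$ we have $\nu(A\cup B)=h(\mathbb{P}(A)+\mathbb{P}(B))$, and since $\mathbb{P}(A),\mathbb{P}(B)$ range over all pairs summing to at most $1$, subadditivity of $\nu$ is exactly subadditivity of $h$. For the sure-event condition, write $A\cup B=\Omega$ as $A^c\cap B^c=\emptyset$ and put $u=\mathbb{P}(A^c)$, $v=\mathbb{P}(B^c)$; substituting $h(1-t)=1-\tilde h(t)$ turns $\nu(\Omega)+\nu(A\cap B)\le\nu(A)+\nu(B)$ into $\tilde h(u+v)\ge\tilde h(u)+\tilde h(v)$. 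Hence $\nu$ is pseudo-concave iff $h$ is subadditive and $\tilde h$ is superadditive, which is precisely dual subadditivity.

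For (i)$\Leftrightarrow$(iii), necessity is most transparent via test allocations. Two indicators $(\mathbb{1}_A,\mathbb{1}_B)$ form a counter-monotonic pair exactly when $A\cap B=\emptyset$ or $A\cup B=\Omega$. Taking disjoint $A,B$ yields $h(\mathbb{P}(A))+h(\mathbb{P}(B))\ge h(\mathbb{P}(A\cup B))$, i.e.\ subadditivity of $h$; taking $A\cup B=\Omega$, where $X_1+X_2=1+\mathbb{1}_{A\cap B}$ so that translation invariance applies, yields $h(\mathbb{P}(A))+h(\mathbb{P}(B))\ge 1+h(\mathbb{P}(A\cap B))$, equivalent to superadditivity of $\tilde h$. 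Together with (ii)$\Leftrightarrow$(iii) this gives (i)$\Rightarrow$(iii). For sufficiency I would pass to the dual capacity $\tilde\nu=\tilde h\circ\mathbb{P}$: since $\int(-X)\,\mathrm{d}\nu=-\int X\,\mathrm{d}\tilde\nu$ and a pair is counter-monotonic iff its negation is, counter-monotonic subadditivity of $\rho_h=\int\cdot\,\mathrm{d}\nu$ is equivalent to counter-monotonic \emph{super}additivity of $\int\cdot\,\mathrm{d}\tilde\nu$. By \citet[Theorem 1]{aouani2021propensity} the latter holds iff $\tilde\nu$ is pseudo-convex (the dual notion: superadditive and convex at the sure event), and a routine complementation shows $\tilde\nu$ is pseudo-convex iff $\nu$ is pseudo-concave. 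This closes the cycle.

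The main obstacle is the sufficiency step (iii)$\Rightarrow$(i): proving the inequality for \emph{all} counter-monotonic pairs rather than only the indicator tests. The duality argument isolates this difficulty inside \citet[Theorem 1]{aouani2021propensity}; if a self-contained proof is wanted, I would instead use that $\rho_h$ is $1$-Lipschitz in the sup-norm to reduce to simple $(X_1,X_2)$, order the finitely many atoms so that $X_1$ is increasing and $X_2$ decreasing, and induct on the number of atoms, applying subadditivity and concavity-at-the-sure-event of $\nu$ to each elementary transfer. A two-atom computation already shows that concavity at the sure event alone controls that base case, while subadditivity of $h$ becomes essential once three or more atoms (equivalently, disjointly supported components) appear; managing the non-monotone rearrangement of $X_1+X_2$ across atoms is the delicate bookkeeping that the duality route conveniently avoids.
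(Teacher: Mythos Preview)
Your proposal is correct and follows essentially the same route as the paper: both reduce (i) to counter-monotonic subadditivity of $\rho_h$, verify (ii)$\Leftrightarrow$(iii) by a direct computation on $h$ (the paper cites \cite{principi2023antimonotonicity} here, you compute it yourself), obtain (i)$\Rightarrow$(ii) via indicator test pairs, and invoke \citet[Theorem~1]{aouani2021propensity} for the hard direction (iii)$\Rightarrow$(i). Your duality step making explicit why the superadditivity/pseudo-convexity statement in that reference translates to subadditivity/pseudo-concavity of $\nu$ is a welcome clarification that the paper leaves implicit.
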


\begin{proof}
    We first show the implication (i) $\Rightarrow$ (ii). Clearly, statement (i) implies $\rho_{h}(X) \leq \rho_{h}(X_{1}) + \rho_{h}(X_{2})$ for $(X_{1}, X_{2})\in \mathbb{A}_{2}^{-}(X)$.
    Let  $A, B \in \mathcal{F}$ be such that $ A\cap B =\emptyset$. It is straightforward to show $h$ is subadditive
    by taking $X_{1}= \mathbb{1}_{A}$ and $X_{2}= \mathbb{1}_{B}$.
    Moreover, for such $A$ and $B$, it can be verified that $X_{1} = \mathbb{1}_{A \cup B}$ and $X_{2}=\mathbb{1}_{A^\mathsf{c}} $ are counter-monotonic (see Table \ref{tab:examp_counter}). It follows that $h(x+y-1)+1 \leq h(x)+h(y)$ for $x, y \in [0,1]$ with $x+y\geq 1$, which is equivalent to superadditivity of $\tilde{h}$ by some simple manipulations. 

    \begin{table}
        \centering
        \begin{tabular}{cccc}\hline
            & $A$ & $B$  & $A^\mathsf{c}\cap B^\mathsf{c}$  \\ \hline
       $\mathbb{1}_{A \cup B} $ & 1 & 1 & 0 \\ 
         $\mathbb{1}_{A^\mathsf{c}} $ & 0  & 1 & 1  \\ 
         $\mathbb{1}_{A \cup B} + \mathbb{1}_{A^\mathsf{c}} $ & 1  & 2 & 1 \\ \hline
        \end{tabular}
        \caption{A counter-monotonic allocation}
        \label{tab:examp_counter}
    \end{table}
       
    Next, we show (ii) $\Rightarrow$ (iii). Statement (ii) can be rewritten as $h$ being subadditive and satisfying $h(x+y-1)+1 \leq h(x)+h(y)$ for $x, y \in [0,1]$ with $x+y\geq 1$, which is equivalent to pseudo-concavity of $\nu$; see \citet[Example 14]{principi2023antimonotonicity}.

For a proof of (iii) $\Rightarrow$ (i), first note that $\rho_{h} \boxminus \rho_{h}(X) \leq \rho_{h}(X)$ by Theorem \ref{th:three_h}. Thus it suffices to show $\rho_{h}$ is counter-monotonic subadditive,
which directly follows Theorem 1 of \cite{aouani2021propensity}. 
\end{proof}


As we know, pairwise counter-monotonicity is the generalization of counter-monotonicity for the case of $ n\geq 3$. 
The next result shows that Theorem \ref{theorem:counter_two} can be generalized to the case of $n$ agents with identical distortion risk functions. 

\begin{theorem}\label{th:h_n}
Let $h\in\mathcal{H}$ and $n\ge 3$. 
\begin{enumerate}[(i)]
\item  
$ \dboxminus_{i=1}^{n} \rho_{h}= \rho_{h}$ holds true if and only if $h$ is dually subadditive.
\item  
$ \dsquare_{i=1}^{n} \rho_{h}= \rho_{h}$ holds true if and only if $h$ is concave.
\item  $ \dboxplus_{i=1}^{n} \rho_{h}= \rho_{h}$ holds true for all $h$.
\end{enumerate}
\end{theorem}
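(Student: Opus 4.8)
The plan is to deduce all three parts from facts already available for two agents together with Theorem \ref{th:three_h}, so that the only genuinely new content lies in passing from two agents to $n\ge 3$ in part (i). Part (iii) needs nothing new, since Theorem \ref{th:three_h} already gives $\dboxplus_{i=1}^{n}\rho_{h}=\rho_{h}$ for every $h\in\mathcal H$ and every $n$. For part (ii), the implication ``$h$ concave $\Rightarrow\dsquare_{i=1}^{n}\rho_{h}=\rho_{h}$'' is again Theorem \ref{th:three_h}; for the converse I would observe that $\dsquare_{i=1}^{n}\rho_{h}=\rho_{h}$ forces $\rho_{h}(X)\le\sum_{i}\rho_{h}(X_i)$ for every allocation, and in particular, taking $(X_1,X_2,0,\dots,0)$ and using $\rho_{h}(0)=0$, yields subadditivity of $\rho_{h}$, which is equivalent to concavity of $h$ by the characterization recalled in Section \ref{sec:pre} (\cite{wang2020characterization}).

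For part (i), the forward direction ($\Rightarrow$) reduces immediately to the case $n=2$. If $(X_1,X_2)$ is a counter-monotonic pair, then $(X_1,X_2,0,\dots,0)\in\mathbb A_n^{-}(X_1+X_2)$, because a constant is counter-monotonic with everything, so $\dboxminus_{i=1}^{n}\rho_{h}(X_1+X_2)\le\rho_{h}(X_1)+\rho_{h}(X_2)$. Combined with the hypothesis $\dboxminus_{i=1}^{n}\rho_{h}=\rho_{h}$, this gives $\rho_{h}\boxminus\rho_{h}=\rho_{h}$, whence $h$ is dually subadditive by Theorem \ref{theorem:counter_two}.

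The substance of the theorem is the reverse direction ($\Leftarrow$) of (i): assuming $h$ dually subadditive, I must show $\dboxminus_{i=1}^{n}\rho_{h}\ge\rho_{h}$ (the opposite inequality being Theorem \ref{th:three_h}), i.e. $\sum_{i}\rho_{h}(X_i)\ge\rho_{h}(X)$ for every counter-monotonic allocation. My plan is first to normalize via Proposition \ref{prop:counter}: writing $X_i=(X-m)\mathbb{1}_{A_i}+m_i$ and invoking translation invariance of $\rho_{h}$, the inequality is equivalent to $\sum_{i}\rho_{h}(Y\mathbb{1}_{A_i})\ge\rho_{h}(Y)$ for $Y:=X-m$, which is either nonnegative or nonpositive, with $(A_1,\dots,A_n)\in\Pi_n$; that is, it suffices to treat jackpot and scapegoat allocations. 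I would then prove this partition-subadditivity by induction on the number of blocks. The key observation is that grouping blocks preserves counter-monotonicity: for nonnegative (or nonpositive) $Y$ and disjoint $A_1$ and $B=A_2\cup\dots\cup A_k$, the pair $(Y\mathbb{1}_{A_1},Y\mathbb{1}_{B})$ is counter-monotonic, so the $n=2$ counter-monotonic subadditivity supplied by Theorem \ref{theorem:counter_two} gives $\rho_{h}(Y\mathbb{1}_{A_1}+Y\mathbb{1}_{B})\le\rho_{h}(Y\mathbb{1}_{A_1})+\rho_{h}(Y\mathbb{1}_{B})$; applying the inductive hypothesis to $Y\mathbb{1}_{B}=\sum_{i\ge2}Y\mathbb{1}_{A_i}$ closes the induction.

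Finally I would dispose of the configurations excluded by Proposition \ref{prop:counter}, namely when fewer than three of the $X_i$ are non-degenerate: there at most two components carry randomness, so after peeling off the constants with translation invariance the claim is once more the $n=2$ statement. I expect the main obstacle to be the verification that block-grouping keeps a jackpot or scapegoat allocation counter-monotonic (a short but sign-case-based check on the increments), together with ensuring that the translation-invariance reduction respects the constraint $m=\sum_i m_i$ on the constants in both branches \eqref{eq:counter_form1} and \eqref{eq:counter_form2} of Proposition \ref{prop:counter}.
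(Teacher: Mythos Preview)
Your proposal is correct. Parts (ii) and (iii) match the paper's argument essentially verbatim. For part (i), the forward direction is the same in both (reduce to $n=2$ and invoke Theorem \ref{theorem:counter_two}), but the reverse direction proceeds differently.

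The paper introduces the auxiliary allocation set $\mathbb A^{*}_n(X)$ of allocations $(X_1,\dots,X_n)$ satisfying \emph{sequential} counter-monotonicity, meaning that $\bigl(\sum_{i<j}X_i,\,X_j\bigr)$ is counter-monotonic for each $j$, and then cites Theorem~2(ii) of \cite{lauzier2023pairwise} to assert that every pairwise counter-monotonic allocation already lies in $\mathbb A^{*}_n(X)$; the chaining $\sum_i\rho_h(X_i)\ge\rho_h(X_1+X_2)+\sum_{i\ge3}\rho_h(X_i)\ge\cdots\ge\rho_h(X)$ then applies directly on $\mathbb A^{*}_n$. Your route instead exploits the explicit stochastic representation of Proposition~\ref{prop:counter} to strip off the constants $m_i$ via translation invariance and reduce to jackpot/scapegoat allocations, for which you verify the needed pairwise counter-monotonicity (of $(Y\mathbb 1_{A_1},Y\mathbb 1_B)$ with $Y\ge0$ or $Y\le0$ and $A_1\cap B=\emptyset$) by hand before running the same inductive chain. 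Your approach is more self-contained, since it relies only on Proposition~\ref{prop:counter} (already in the paper) rather than an additional cited result, at the price of a separate case-check for the degenerate configurations with at most two non-constant components. The paper's approach is cleaner in that it never needs the explicit representation nor the degenerate case, but it does import the external fact on partial sums. Both arguments are ultimately the same telescoping of the $n=2$ counter-monotonic subadditivity.
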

\begin{proof}
(i) 
It is straightforward to show the ``only if'' part by Theorem \ref{theorem:counter_two}. Next, we demonstrate the ``if'' statement.
To prove whether the theorem can be generalized to $n$ agents for $n \geq 3$, it suffices to show whether the inf-convolution of $n$ risk measures can be understood as applying the inf-convolutions of two risk measures repeatedly. Specifically, we aim to show $\dboxminus_{i=1}^{n}\rho_{h}=  {\rho_{h}\dboxminus \dots \dboxminus \rho_{h}} $, where the right-hand side has $n$ terms of $\rho_h$.
For simplicity, we denote the repeated inf-convolution of $n$ risk measures from left to right as  $\rho_{h}\dboxminus \dots \dboxminus \rho_{h}$. 

For a given $X\in \mathcal{X}$, we define a new set of allocations of $X$, that is,
$$
\mathbb{A}^{*}_{n}(X)=\left\{(X_{1}, \dots, X_{n})\in \mathbb{A}_{n}(X): 
\sum_{i=1}^{j-1}X_{i}\ \text{and}\  X_{j} \ \text{are counter-monotonic for}\ j\in[n]\right\}.
$$
Then $\rho_{h}\dboxminus \dots \dboxminus \rho_{h}$ can be expressed as: 
\begin{align}
    \rho_{h}\dboxminus \dots \dboxminus \rho_{h}(X)=\inf \left\{\sum_{i=1}^{n} \rho_{h}(X_{i}): (X_{1}, \dots, X_{n})\in \mathbb{A}^{*}_{n}(X)
    \right\}.
\end{align}
It is straightforward to show $\rho_{h}\dboxminus \dots \dboxminus \rho_{h} \leq \dboxminus_{i=1}^{n}\rho_{h}$. 
In fact, the inequality also holds true when applying multiple $\rho_{i}$ for $i\in [n]$. 
The result directly follows part (ii) of Theorem 2 in \cite{lauzier2023pairwise}, implying 
$\mathbb{A}^{*}_{n}(X)$ is a subset of $\mathbb{A}_{n}(X)$. 

Next, we show the converse direction. 
For any $(X_{1}, \dots, X_{n})\in \mathbb{A}^{*}_{n}(X)$, by repeatedly using dually subadditivity of $h$ and Theorem \ref{theorem:counter_two}, we have 
\begin{align*}
    \sum_{i=1}^{n}\rho_{h}(X_{i}) &=\rho_{h}(X_{1})+ \rho_{h}(X_{2})+ \sum_{i=3}^{n}\rho_{h}(X_{i})
    \\& \geq \rho_{h}(X_{1}+X_{2}) + \sum_{i=3}^{n}\rho_{h}(X_{i}) 
    \\ & \geq  \rho_{h}(X_{1}+\dots + X_{n-1}) + \rho_{h}(X_{n})
    \\ & \geq \rho_{h}(X) \geq \dboxminus_{i=1}^{n}\rho_{h}(X), 
\end{align*}
which implies that $ {\rho_{h}\dboxminus \dots \dboxminus \rho_{h}}\geq \dboxminus_{i=1}^{n}\rho_{h}$. 
Combining the above results, we can conclude that $\dboxminus_{i=1}^{n}\rho_{h} = \rho_{h}\dboxminus \dots \dboxminus \rho_{h}= \rho_{h}$.

(ii) The ``if'' part directly follows Theorem \ref{th:three_h}. It is also straightforward to prove the ``only if'' part since concavity of $h$ is equivalent to subadditivity of $\rho_h$.

(iii) The result holds trivially from comonotonic additivity of $\rho_{h}$.
\end{proof}

\begin{remark} The inf-convolution of $n$ risk measures $\dsquare_{i=1}^{n} \rho_{{i}}$ is equal to the repeated application of inf-convolutions of two risk measures; see \citet[Lemma 2]{liu2020}. That is, 
$\dsquare _{i=1}^n \rho_{{i}} = \rho_{{1}} \dsquare  
 \dots \dsquare \rho_{{n}}  .$
 Similarly, it also holds that
 $\dboxplus _{i=1}^n \rho_{{i}} = \rho_{{1}} \dboxplus  
 \dots \dboxplus \rho_{{n}}$. 
The relation does not hold true for the counter-monotonic inf-convolution. A counter-example is given by Example \ref{exa:repeat} in Appendix \ref{sec:counter-example}, showing
$$\dboxminus_{i=1}^n \rho_{{i}}(X) > \rho_{{1}} \boxminus 
 \dots \boxminus \rho_{{n}}(X) \mbox{~~~
 for some $X\in \mathcal X$.}$$
 This is due to the fact that counter-monotonicity behaves quite differently in the cases $n=2$ and $n\ge 3$. 
 Nevertheless, $\dboxminus_{i=1}^n \rho_{{i}}= \rho_{{1}} \boxminus 
 \dots \boxminus \rho_{{n}}$ can be expected in many  relevant cases. 
For instance, it
 holds true for agents with dually subadditive distortion functions, as shown in Theorem \ref{th:h_n}. Additionally, it also holds when risks are measured with VaRs because the unconstrained inf-convolution and counter-monotonic inf-convolution coincide, as shown in Example \ref{ex:var}.
\end{remark}



We can immediately obtain the following corollary of Theorem \ref{th:h_n}.
\begin{corollary}
    Suppose $\mathcal X=L^\infty$.
    If $h\in \mathcal{H}$ is convex and not the identity, then a comonotonic allocation of $X$ is never Pareto optimal.
\end{corollary}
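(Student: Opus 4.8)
The plan is to reduce the statement to a single strict inequality and then certify that inequality with one explicit allocation. Since $\rho_h$ is a monetary risk measure, Pareto optimality is equivalent to sum-optimality (\citet[Proposition 1]{embrechts2018quantile}). Moreover, every comonotonic allocation $(X_1,\dots,X_n)\in\mathbb A_n^{+}(X)$ satisfies $\sum_{i=1}^n\rho_h(X_i)=\rho_h(X)$ by comonotonic additivity (equivalently, $\dboxplus_{i=1}^n\rho_h=\rho_h$ in Theorem \ref{th:three_h}). Hence the common comonotonic value is $\rho_h(X)$, and no comonotonic allocation can be sum-optimal — and so none is Pareto optimal — as soon as $\dsquare_{i=1}^n\rho_h(X)<\rho_h(X)$, i.e.\ as soon as this value strictly exceeds the infimum over all allocations. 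Thus the whole claim follows once I establish this strict inequality for the given $X$.

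To produce the gap I would exhibit one allocation whose total risk lies strictly below $\rho_h(X)$, built from a large ``side bet''. First I note that for convex $h\in\mathcal H$ one has $h(t)\le t$ on $[0,1]$, hence $h(p)+h(1-p)\le 1$ for every $p$; if equality held for all $p\in(0,1)$ then $h(p)=p$ throughout and $h$ would be the identity. Since $h\neq\mathrm{id}$, there is $p\in(0,1)$ with $h(p)+h(1-p)<1$, that is $\tilde h(p)-h(p)=1-h(1-p)-h(p)>0$, where $\tilde h$ is the dual function. I would then fix an event $C$ with $\mathbb P(C)=p$ (available since the space is atomless), set $b=\esssup X<\infty$, and for $t>0$ take the allocation $X_1=X+t\mathbb{1}_C$, $X_2=-t\mathbb{1}_C$, and $X_i=0$ for $i\ge 3$, which lies in $\mathbb A_n(X)$.

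Evaluating the total risk uses only positive homogeneity, translation invariance, monotonicity, $\rho_h(\mathbb{1}_C)=h(p)$, and the standard dual identity $\rho_h(-Y)=-\rho_{\tilde h}(Y)$. From $X\le b$ and monotonicity, $\rho_h(X_1)\le b+t\,h(p)$, while $\rho_h(X_2)=-t\,\tilde h(p)$, so
\[
\sum_{i=1}^n\rho_h(X_i)\le b - t\big(\tilde h(p)-h(p)\big).
\]
Because $\tilde h(p)-h(p)>0$ and $\rho_h(X)\le b$, choosing $t$ large enough makes the right-hand side strictly smaller than $\rho_h(X)$, giving $\dsquare_{i=1}^n\rho_h(X)<\rho_h(X)$ and completing the argument. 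Note the bound in fact forces the unconstrained inf-convolution to equal $-\infty$, which is precisely why unconstrained risk sharing degenerates for risk-seeking agents; the argument is uniform in $X$, so ``never'' holds even for degenerate $X$.

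The main obstacle — and the one substantive idea — is obtaining a strict gap for an \emph{arbitrary} $X$ rather than merely for some $X$, since Theorem \ref{th:h_n}(ii) only yields $\dsquare_{i=1}^n\rho_h\neq\rho_h$ as functionals. The resolution is to decouple the payoff $X$ from the source of improvement by adding an unbounded counter-monotonic pair $(t\mathbb{1}_C,-t\mathbb{1}_C)$ whose aggregate distortion value $t\big(h(p)-\tilde h(p)\big)$ diverges to $-\infty$, dominating the bounded contribution of $X$. Locating a level $p$ with $\tilde h(p)>h(p)$ is exactly where convexity together with $h\neq\mathrm{id}$ is used.
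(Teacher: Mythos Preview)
Your proof is correct and takes a genuinely different route from the paper's. The paper argues by contradiction: assuming a comonotonic allocation of $X$ is Pareto optimal, comonotonic additivity gives $\sum_i\rho_h(X_i)=\rho_h(X)$, and sum-optimality gives $\rho_h(X)=\dsquare_{i=1}^n\rho_h(X)$; the paper then invokes Theorem~\ref{th:h_n}(ii) to conclude that $h$ is concave, contradicting the hypothesis. You instead work constructively, exhibiting for every $X$ the side bet $(X+t\mathbb 1_C,\,-t\mathbb 1_C,\,0,\dots,0)$ and driving the aggregate value to $-\infty$ via the strict gap $\tilde h(p)-h(p)>0$ available at some $p$ because $h$ is convex and $h\neq\mathrm{id}$.

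What your approach buys is precisely the point you flag as the ``main obstacle'': Theorem~\ref{th:h_n}(ii) is an equivalence of \emph{functionals}, so from $\dsquare_{i=1}^n\rho_h(X)=\rho_h(X)$ at a single $X$ the paper's step to ``$h$ is concave'' is, as written, not justified; your explicit side-bet construction supplies exactly the missing pointwise strict inequality, and in fact establishes the stronger conclusion $\dsquare_{i=1}^n\rho_h(X)=-\infty$ for every $X\in L^\infty$. This is essentially the content of Theorem~\ref{theorem:n_convex}(iii), which appears later in the paper (and is proved there only for $X\in\mathcal X^\perp$); your argument is more elementary, self-contained, and does not need the independence assumption. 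The paper's route is shorter on the page but leans on a black-box whose statement does not literally deliver what is needed; your route is longer but airtight.
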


\begin{proof}
    Let $\mathcal X=L^\infty$. We first assume some comonotonic allocations of $X$ are Pareto optimal. Thus, there exists an allocation $(X_{1}, \ldots, X_{n})\in \mathbb{A}_{n}^{+}$ such that 
    \begin{align*}
        \sum_{i=1}^{n} \rho_{h}(X_{i})=\rho_{h}(X)=\dsquare_{i=1}^{n} \rho_{h}(X).
    \end{align*}
    The first equality is from the comonotonic additivity of $\rho_{h}$ and the second is from the Pareto optimality of $(X_{1}, \ldots, X_{n})$. 
    By Theorem \ref{th:h_n}, the second equality implies that $h$ is concave, a contradicting $h$ being convex and not the identity.
    Therefore, a comonotonic allocation $(X_{1}, \ldots, X_{n})\in  \mathbb{A}_{n}^{+}$ cannot be Pareto optimal.
\end{proof}

So far, we have seen that for concave distortion functions, 
the counter-monotonic inf-convolution gives equivalent results to the comonotonic and unconstrained ones. 
The situation is drastically different when risk-seeking agents are involved. 
This will be discussed in the next section.

\section{Risk-seeking agents}\label{sec:convex}
In this section, our aim is to investigate the risk sharing problem for risk-seeking agents and further determine the explicit form of the corresponding counter-monotonic inf-convolutions. The set of random variables $\mathcal{X}$ in this section may be the set $L^\infty$ of bounded random variables, the set $L^+$ of nonnegative bounded random variables, or the set $L^-$ of nonpositive bounded random variables. We will specify which set is used in each case.
As we know, it always holds that $\dboxminus_{i=1}^n \rho_{h}\leq \rho_h$. Counter-monotonic risk sharing among risk-averse agents achieves the upper bound $\rho_h$, as shown in Theorem \ref{th:h_n}.

We now take a small detour to consider sup-convolution instead of inf-convolution. 
The following proposition shows
$\rho_h$
is indeed the 
supremum of aggregate risks for risk-seeking agents over all possible allocations, whether in the general sense or in the counter-monotonic sense.

\begin{proposition}\label{prop:convex_sup}
    Suppose $\mathcal{X}=L^\infty$ and 
    $X\in \mathcal{X}$.
    If $h\in \mathcal{H}$ is convex, then \begin{align*}
         \rho_{h}(X)&=\sup \left\{\sum_{i=1}^{n}\rho_{h}(X_{i}): (X_{1}, \ldots, X_{n})\in \mathbb{A}_{n}(X)\right\}
    \\& =\sup \left\{\sum_{i=1}^{n}\rho_{h}(X_{i}): (X_{1}, \ldots, X_{n})\in \mathbb{A}_{n}^{-}(X)\right\}.\end{align*}
\end{proposition}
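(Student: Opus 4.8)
The plan is to sandwich both suprema between $\rho_h(X)$ from below and from above. Since every counter-monotonic allocation is an allocation, we have $\mathbb{A}_n^{-}(X)\subseteq\mathbb{A}_n(X)$, and hence the supremum over $\mathbb{A}_n^{-}(X)$ is at most the supremum over $\mathbb{A}_n(X)$. Consequently it is enough to establish two bounds: (a) $\sum_{i=1}^n\rho_h(X_i)\le\rho_h(X)$ for every $(X_1,\dots,X_n)\in\mathbb{A}_n(X)$, which caps the larger (unconstrained) supremum at $\rho_h(X)$; and (b) the value $\rho_h(X)$ is attained by some counter-monotonic allocation, which pushes the smaller (counter-monotonic) supremum up to $\rho_h(X)$. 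Together with the set inclusion, these squeeze all three quantities to the common value $\rho_h(X)$.

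For the lower bound (b) I would use the degenerate allocation $(X,0,\dots,0)$, which is simultaneously comonotonic and counter-monotonic (as already noted in the proof of Theorem \ref{th:three_h}) and therefore lies in $\mathbb{A}_n^{-}(X)$. Since $\rho_h(0)=0$, its aggregate value is exactly $\rho_h(X)$, giving $\sup_{\mathbb{A}_n^{-}(X)}\sum_{i=1}^n\rho_h(X_i)\ge\rho_h(X)$.

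The crux is the upper bound (a), which amounts to showing that convexity of $h$ makes $\rho_h$ superadditive. I would obtain this by passing to the dual distortion $\tilde h(t)=1-h(1-t)$: since $h$ is convex, $\tilde h$ is concave, so $\rho_{\tilde h}$ is subadditive by the characterization of \citet[Theorem 3]{wang2020characterization}. The standard duality identity $\rho_h(Y)=-\rho_{\tilde h}(-Y)$, which follows from a change of variables in the Choquet integral, then converts subadditivity of $\rho_{\tilde h}$ into superadditivity of $\rho_h$: for any $Y,Z\in\mathcal X$, $\rho_h(Y+Z)=-\rho_{\tilde h}(-Y-Z)\ge -\rho_{\tilde h}(-Y)-\rho_{\tilde h}(-Z)=\rho_h(Y)+\rho_h(Z)$. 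Iterating over the $n$ summands yields $\sum_{i=1}^n\rho_h(X_i)\le\rho_h(\sum_{i=1}^n X_i)=\rho_h(X)$ for every allocation in $\mathbb{A}_n(X)$, hence $\sup_{\mathbb{A}_n(X)}\sum_{i=1}^n\rho_h(X_i)\le\rho_h(X)$.

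Chaining the bounds gives $\rho_h(X)\le\sup_{\mathbb{A}_n^{-}(X)}\sum_{i=1}^n\rho_h(X_i)\le\sup_{\mathbb{A}_n(X)}\sum_{i=1}^n\rho_h(X_i)\le\rho_h(X)$, so equality holds throughout, which is the claim. The only nontrivial step is establishing superadditivity, and I expect the main obstacle to be verifying the duality identity $\rho_h(Y)=-\rho_{\tilde h}(-Y)$ with care, in particular handling the measure-zero discrepancy between $\mathbb{P}(X>x)$ and $\mathbb{P}(X\ge x)$ inside the Choquet integral; on $L^\infty$ this is routine, and once the identity is in hand everything else is immediate.
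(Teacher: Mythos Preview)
Your proposal is correct and follows essentially the same sandwich argument as the paper: superadditivity of $\rho_h$ gives the upper bound on the unconstrained supremum, the trivial allocation $(X,0,\dots,0)\in\mathbb{A}_n^{-}(X)$ gives the lower bound, and the set inclusion $\mathbb{A}_n^{-}(X)\subseteq\mathbb{A}_n(X)$ closes the chain. The only difference is cosmetic: the paper obtains superadditivity by citing \citet[Theorem 3]{wang2020characterization} directly, whereas you spell out the same fact via the dual distortion $\tilde h$ and the identity $\rho_h(Y)=-\rho_{\tilde h}(-Y)$.
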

\begin{proof}
If $h$ is convex, then $\rho_h$ is superadditive (\citet[Theorem 3]{wang2020characterization}), implying
\begin{align*}
         \rho_{h}(X)&\ge \sup \left\{\sum_{i=1}^{n}\rho_{h}(X_{i}): (X_{1}, \ldots, X_{n})\in \mathbb{A}_{n}(X)\right\}
    \\& \ge \sup \left\{\sum_{i=1}^{n}\rho_{h}(X_{i}): (X_{1}, \ldots, X_{n})\in \mathbb{A}_{n}^{-}(X)\right\}\ge \rho_h(X),\end{align*}
    where the last inequality is due to $(X,0,\dots,0)\in \mathbb A_n^{-}(X)$. 
  This gives the desired equalities. 
\end{proof}

Proposition \ref{prop:convex_sup} shows that $\rho_{h}$ is indeed the sup-convolution of several concave distortion risk measures.  
Unlike the comonotonic risk sharing problem, where we always have $\dboxplus_{i=1}^{n}\rho_h=\rho_h$ for any $h$, regardless of $h$ is concave or convex, the counter-monotonic risk sharing is more complicated. 
In particular, for risk-seeking agents, the equality in $\dboxminus_{i=1}^{n}\rho_{h} \leq \rho_{h}$ generally does not hold true, which can also be verified by some numerical results in Table \ref{tab:examp_counter}.

The technique of counter-monotonic improvement theorem, a converse to  the comonotonic improvements, as introduced in \cite{lauzier2024negatively}, will be helpful for our next results.  

\begin{theorem}[\cite{lauzier2024negatively}]\label{theorem:counter_impro}
     Let $X_1, \ldots, X_n \in L^1$ be nonnegative and $X=\sum_{i=1}^n X_i$. Assume that there exists a uniform random variable $U$ independent of $X$. Then, there exists $\left(Y_1, \ldots, Y_n\right) \in \mathbb{A}_n(X)$ such that (i) $\left(Y_1, \ldots, Y_n\right)$ is counter-monotonic; (ii) $Y_i \geq_{\mathrm{cx}} X_i$ for $i \in[n]$; (iii) $Y_1, \ldots, Y_n$ are nonnegative. Moreover, $\left(Y_1, \ldots, Y_n\right)$ can be chosen as a jackpot allocation.
\end{theorem}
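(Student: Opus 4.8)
The plan is to construct $(Y_1,\dots,Y_n)$ explicitly as a jackpot allocation in which, within each level set of $X$, a single ``winner'' is selected at random using the independent uniform $U$, the winning probabilities being tuned so that each $Y_i$ inherits the same conditional mean given $X$ as $X_i$. Concretely, for $x>0$ I would set $q_i(x)=\mathbb{E}[X_i\mid X=x]/x$; since $0\le X_i\le X$ and $\sum_i X_i=X$, these satisfy $q_i(x)\in[0,1]$ and $\sum_{i=1}^n q_i(x)=1$. I then partition $(0,1]$ into consecutive intervals $I_1(x),\dots,I_n(x)$ of lengths $q_1(x),\dots,q_n(x)$, put $A_i=\{U\in I_i(X)\}$ (assigning $\{X=0\}$ arbitrarily to $A_1$), and define $Y_i=X\mathbb{1}_{A_i}$.

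First I would record the routine properties. Because $(A_1,\dots,A_n)\in\Pi_n$ almost surely, $\sum_{i=1}^n Y_i = X\sum_i \mathbb{1}_{A_i}=X$, so $(Y_1,\dots,Y_n)\in\mathbb{A}_n(X)$; each $Y_i=X\mathbb{1}_{A_i}\ge 0$ since $X\ge 0$, giving (iii); and $(Y_1,\dots,Y_n)$ is by construction a jackpot allocation, giving the final claim. For counter-monotonicity (i), note that $Y_iY_j=0$ for $i\ne j$ because $A_i$ and $A_j$ are disjoint, and two nonnegative random variables with disjoint supports are counter-monotonic; equivalently, this is exactly the jackpot allocation already observed to be counter-monotonic in the discussion following Proposition \ref{prop:counter}.

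The substance lies in (ii), which I would verify by conditioning on $X$ and invoking an extremality property of two-point distributions. Since $U$ is independent of $X$, conditionally on $\{X=x\}$ the variable $U$ remains uniform, so $\mathbb{P}(A_i\mid X=x)=q_i(x)$, and hence $Y_i\mid\{X=x\}$ is the two-point law on $\{0,x\}$ with $\mathbb{P}(Y_i=x\mid X=x)=q_i(x)$; its conditional mean is $x\,q_i(x)=\mathbb{E}[X_i\mid X=x]$, matching that of $X_i$. The key lemma is that among all distributions supported in $[0,x]$ with a prescribed mean $\mu$, the two-point law on the endpoints $\{0,x\}$ is largest in convex order: for convex $\phi$ the chord bound $\phi(z)\le \phi(0)\tfrac{x-z}{x}+\phi(x)\tfrac{z}{x}$ on $[0,x]$ yields $\mathbb{E}[\phi(X_i)\mid X=x]\le \phi(0)\tfrac{x-\mu}{x}+\phi(x)\tfrac{\mu}{x}=\mathbb{E}[\phi(Y_i)\mid X=x]$ with $\mu=\mathbb{E}[X_i\mid X=x]$. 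Integrating over the law of $X$ gives $\mathbb{E}[\phi(X_i)]\le\mathbb{E}[\phi(Y_i)]$ for every convex $\phi$, i.e.\ $Y_i\ge_{\mathrm{cx}}X_i$.

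The hard part, and where I expect the argument to be most delicate, is that the hypothesis supplies a uniform $U$ independent only of the \emph{sum} $X$, not of the whole vector $(X_1,\dots,X_n)$. This forces the randomization to live at the level of $\sigma(X)$ rather than $\sigma(X_1,\dots,X_n)$, which is precisely why the winning probabilities must be the $\sigma(X)$-measurable quantities $\mathbb{E}[X_i\mid X]/X$ rather than the naive $X_i/X$, and why the convex-order comparison is carried out fiberwise over the level sets of $X$ (where the matched quantity is the conditional mean $\mathbb{E}[X_i\mid X]$) rather than through a global martingale coupling. The remaining work is bookkeeping: checking that $q_i$ is well defined and that the intervals $I_i(X)$ are measurable functions of $X$, verifying $Y_i\in L^1$ from $0\le Y_i\le X$, and disposing of $\{X=0\}$, on which every $X_i$ and $Y_i$ vanishes so all requirements hold trivially.
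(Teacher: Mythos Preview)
The paper does not actually prove this statement: it is quoted verbatim as a result of \cite{lauzier2024negatively} and invoked as a black box in the proofs of Lemma~\ref{lemma:convex} and Theorem~\ref{theorem:n_convex}. So there is no ``paper's own proof'' to compare against.

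That said, your argument is correct and self-contained. The construction $Y_i=X\mathbb{1}_{\{U\in I_i(X)\}}$ with interval lengths $q_i(X)=\E[X_i\mid X]/X$ is exactly the right object: it is visibly a jackpot allocation, hence counter-monotonic and nonnegative, and your fiberwise convex-order argument is sound. The key step---that on $\{X=x\}$ the two-point law on $\{0,x\}$ with mean $\E[X_i\mid X=x]$ dominates in convex order any law on $[0,x]$ with the same mean---is just the chord inequality $\phi(z)\le \tfrac{x-z}{x}\phi(0)+\tfrac{z}{x}\phi(x)$ for convex $\phi$, applied pointwise to $X_i$ and then conditionally averaged. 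Integrating over the law of $X$ gives $\E[\phi(X_i)]\le\E[\phi(Y_i)]$, which is $X_i\le_{\rm cx}Y_i$. You are also right to flag that the hypothesis only gives $U$ independent of $X$, not of $(X_1,\dots,X_n)$, and your use of the $\sigma(X)$-measurable weights $\E[X_i\mid X]/X$ is precisely what this forces; the naive choice $X_i/X$ would not guarantee $\p(A_i\mid X)=q_i(X)$. The residual measurability and $L^1$ checks are routine, as you say.
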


The counter-monotonic improvement theorem indicates the jackpot allocation is always be preferred by risk-seeking agents. To apply the counter-monotonic improvement theorem, it is important to emphasize the technical assumption that there exists a (nondegenerate) uniform random variable $U$  independent of $X$. To formalize this, we introduce the following set:
    $$\mathcal{X}^{\perp}=\{X \in \mathcal{X}: \text{there exists a uniform random variable $U$ independent of} \,X\}.$$

The next lemma provides the optimal value for the inf-convolution when the allocation is constrained to be jackpot allocations, given that the total risk $X$ is nonpositive or nonnegative. The lemma would be helpful to establish our main result (Theorem \ref{theorem:n_convex}).

\begin{lemma}\label{lemma:convex}
   Suppose $X \in \mathcal{X}^{\perp}$ and 
   $h\in \mathcal{H}$ is convex. 
   Denote by  
   \begin{align*}
       I(X) = \inf \left\{\sum_{i=1}^{n}\rho_{h}(X\mathbb{1}_{A_{i}}): (A_{1}, \ldots, A_{n})\in \Pi_{n}\right\}.
   \end{align*}
   The following statements hold.
    \begin{itemize}
        \item[(i)] If $\mathcal X=L^+$, then $I=\rho_{g}$, where $g(t)=n h(t/n)$.
        \item[(ii)] If $\mathcal X=L^-$, then $I=\rho_{g}$, where $g(t)=nh(1-(1-t)/n)-nh(1-1/n)$.                    
    \end{itemize}
\end{lemma}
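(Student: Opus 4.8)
The plan is to reduce both parts to a pointwise optimization carried out \emph{inside} the distortion (Choquet) integral: for a fixed threshold $x$, convexity of $h$ turns the sum $\sum_{i=1}^n h(\cdot)$ into a Jensen-type inequality whose minimum is attained when all arguments are equal, and the independence hypothesis $X\in\mathcal{X}^\perp$ is exactly what allows this ``equal-argument'' configuration to be realized \emph{simultaneously} at every $x$ by a single uniform counter-monotonic partition. In both parts I would prove $I(X)\ge \rho_g(X)$ by the pointwise Jensen bound applied to an arbitrary partition, and $I(X)\le \rho_g(X)$ by exhibiting the uniform partition that turns the inequality into an equality.

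For part (i), with $X\ge 0$, I would first note that for $x>0$ we have $\{X\mathbb{1}_{A_i}>x\}=\{X>x\}\cap A_i$, so that $\rho_h(X\mathbb{1}_{A_i})=\int_0^\infty h(p_i(x))\,\d x$ with $p_i(x)=\mathbb{P}(\{X>x\}\cap A_i)$. Since $(A_1,\dots,A_n)\in\Pi_n$ gives $\sum_{i=1}^n p_i(x)=\mathbb{P}(X>x)$ for every $x$, exchanging the (nonnegative) sum and integral yields $\sum_{i=1}^n\rho_h(X\mathbb{1}_{A_i})=\int_0^\infty \sum_{i=1}^n h(p_i(x))\,\d x$. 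Convexity of $h$ and Jensen's inequality give, pointwise, $\sum_{i=1}^n h(p_i(x))\ge n\,h\!\big(\mathbb{P}(X>x)/n\big)=g(\mathbb{P}(X>x))$, hence $I(X)\ge\rho_g(X)$. For the reverse inequality I would invoke $X\in\mathcal{X}^\perp$: taking $U$ uniform and independent of $X$ and setting $A_i=\{(i-1)/n<U\le i/n\}$ produces a partition with $\mathbb{P}(A_i)=1/n$ independent of $X$, so $p_i(x)=\mathbb{P}(X>x)/n$ for all $x$ and the Jensen step becomes an equality, giving $I(X)=\rho_g(X)$.

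Part (ii), with $X\le 0$, follows the same template on the negative half-line, but the bookkeeping is different. The key computation is that for $x<0$ we have $\{X\mathbb{1}_{A_i}>x\}=(\{X>x\}\cap A_i)\cup A_i^{\mathsf c}$, so $\mathbb{P}(X\mathbb{1}_{A_i}>x)=q_i(x)+1-a_i$ with $q_i(x)=\mathbb{P}(\{X>x\}\cap A_i)$ and $a_i=\mathbb{P}(A_i)$; thus $\rho_h(X\mathbb{1}_{A_i})=\int_{-\infty}^0\big(h(q_i(x)+1-a_i)-1\big)\d x$. Writing $t_i(x)=q_i(x)+1-a_i$ and using $\sum_i q_i(x)=\mathbb{P}(X>x)$, $\sum_i a_i=1$ gives $\sum_i t_i(x)=\mathbb{P}(X>x)+n-1$, so Jensen yields $\sum_i h(t_i(x))\ge n\,h\!\big(1-(1-\mathbb{P}(X>x))/n\big)$, attained by the same uniform partition (for which $t_i(x)=1-(1-\mathbb{P}(X>x))/n$). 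A direct check that $g(t)=nh(1-(1-t)/n)-nh(1-1/n)$ is increasing with $g(0)=0$, and that the $g(1)$ terms in $\rho_g(X)=\int_{-\infty}^0(g(\mathbb{P}(X>x))-g(1))\,\d x$ cancel to leave exactly $\int_{-\infty}^0\big(nh(1-(1-\mathbb{P}(X>x))/n)-n\big)\d x$, identifies the attained value as $\rho_g(X)$.

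The main obstacle is not the convexity/Jensen step nor the use of $\mathcal{X}^\perp$, which are conceptually the heart but become routine once the pointwise reformulation is in place; rather it is the sign- and constant-tracking in the nonpositive case. One must correctly account for the mass on $A_i^{\mathsf c}$ where $X\mathbb{1}_{A_i}=0$, carry the constant $h(1)=1$ (equivalently $g(1)$) that appears in the Choquet integral on the negative axis, and verify that the seemingly mysterious correction $-nh(1-1/n)$ in $g$ is precisely the normalization forcing $g(0)=0$ so that $\rho_g$ reproduces the attained value. This constant-matching is where an error is most likely to arise, so I would isolate it as a short explicit lemma-style computation.
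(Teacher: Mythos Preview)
Your proposal is correct and follows essentially the same approach as the paper: a pointwise Jensen bound inside the Choquet integral for the lower estimate, and the uniform partition (via $X\in\mathcal X^\perp$) for attainment. The only cosmetic difference is in part (ii), where the paper passes to the dual distortion $\tilde h(t)=1-h(1-t)$ (noting that convexity of $h$ makes $\tilde h$ concave, so the same Jensen step applies to $\tilde h$ evaluated at $\mathbb P(X\mathbb{1}_{A_i}\le x)$), and then identifies the result as $\rho_g$ via $-\rho_{\tilde h^*}(-X)$; you instead work directly with $h$ at the shifted arguments $t_i(x)=q_i(x)+1-a_i$ and track the constant $-nh(1-1/n)$ by hand, which is equivalent but slightly more error-prone, as you yourself flag.
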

\begin{proof}
(i) Let $\mathcal X=L^+$ and $X\in \mathcal{X}^{\perp}$. It follows that
 \begin{align*}
        I(X)&= \inf \left\{\sum_{i=1}^{n} \int_{0}^{\infty} h(\mathbb{P}(X \mathbb{1}_{A_{i}}>x)) \mathrm{d} x: (A_{1}, \ldots, A_{n})\in \Pi_{n}\right\}\\ 
        &\geq \int_{0}^{\infty} \inf \left\{\sum_{i=1}^{n}  h(x_{i}) : \sum_{i=1}^{n}x_{i}=\mathbb{P}(X>x)\right\} \mathrm{d} x\\
        &= \int_{0}^{\infty} nh
        \left(\frac{1}{n}\mathbb{P}(X >x)\right) \mathrm{d} x.
    \end{align*}
The last equality holds because of the convexity of $h$. 

Furthermore, 
since we assume that there exists a uniform $U$ independent of $X$, we can always find a partition $(A_{1}, \ldots, A_{n})\in \Pi_{n}$ independent of $X$ such that $\mathbb{P}(A_{i})=1/n$ for all $i\in[n]$. Then we can obtain   
     \begin{align*}
        \sum_{i=1}^{n}\rho_{h}(X\mathbb{1}_{A_{i}})= \int_{0}^{\infty} nh\left(\frac{1}{n}\mathbb{P}(X >x)\right) \mathrm{d} x.
    \end{align*}
    This implies that $I(X)\leq \rho_{g}(X)$, where $g(t)=n h(t/n)$, $t\in [0,1]$.
    Combining the above inequalities, the desired result is obtained.
    
    (ii) Let $\mathcal X=L^-$ and $X\in \mathcal{X}^{\perp}$. It follows that 
    \begin{align*}
        I(X)&=\inf \left\{\sum_{i=1}^{n} \int_{-\infty}^{0} (h(\mathbb{P}(X \mathbb{1}_{A_{i}}>x))-1) \mathrm{d} x: (A_{1}, \ldots, A_{n})\in \Pi_{n}\right\}\\ 
        &= \inf \left\{-\sum_{i=1}^{n} \int_{-\infty}^{0} \tilde{h}(\mathbb{P}(X \mathbb{1}_{A_{i}}\leq x)) \mathrm{d} x: (A_{1}, \ldots, A_{n})\in \Pi_{n}\right\}\\
        & \geq -\int_{-\infty}^{0}\sup \left\{\sum_{i=1}^{n}  \tilde{h}(\mathbb{P}(X \mathbb{1}_{A_{i}}\leq x)): (A_{1}, \ldots, A_{n})\in \Pi_{n}\right\}
        \mathrm{d} x\\
        &= -n \int_{-\infty}^{0} \tilde{h}\left(\frac{1}{n}\mathbb{P}(X \leq x)\right) \mathrm{d} x,
    \end{align*}
    where $\tilde{h}(t) = 1-h(1-t)$. The last equality can be obtained by convexity of $h$, implying $\tilde{h}$ is concave. 
    Furthermore, 
    the above inequality can be rewritten as 
    \begin{align*}
        I(X) &\geq -n \int_{-\infty}^{0} \tilde{h}\left(\frac{1}{n}\mathbb{P}(X \leq x)\right) \mathrm{d} x \\
        &= -n \int_{0}^{\infty} \tilde{h}\left(\frac{1}{n}\mathbb{P}(-X \geq x)\right) \mathrm{d} x = -\rho_{\tilde{h}^{*}}(-X)= \rho_{g}(X),
    \end{align*}
    where $\tilde{h}^{*}(t)=n\tilde{h}(t/n)$. 
    The last equation directly follows from \citet[Lemma 2]{wang2020characterization}, and $g(x)$ is given by
    \begin{align*}
        g(t)=\tilde{h}^{*}(1)-\tilde{h}^{*}(1-t)=n\tilde{h}\left(\frac{1}{n}\right)-n\tilde{h}\left(\frac{1-t}{n}\right)=nh\left(1-\frac{1-t}{n}\right)-nh\left(1-\frac{1}{n}\right).
    \end{align*}
    Similar as (i), 
    we can always find a partition $(A_{1}, \ldots, A_{n})\in \Pi_{n}$ independent of $X$ such that $\mathbb{P}(A_{i})=1/n$ for all $i\in[n]$.
    Then we can obtain that 
     \begin{align*}
        \sum_{i=1}^{n}\rho_{h}(X\mathbb{1}_{A_{i}})= -\int_{-\infty}^{0} n\tilde{h}\left(\frac{1}{n}\mathbb{P}(X \leq x)\right) \mathrm{d} x = \rho_{g}(X), 
    \end{align*}
    which implies that $I(X)\leq \rho_{g}(X)$.
    Therefore, the desired results are obtained.
\end{proof}
The next result  derives explicit formulas of the unconstrained and counter-monotonic inf-convolution of $n$ concave distortion risk measures, as well as the optimal allocation. These formulas vary depending on the set of bounded random variables considered.
In particular, when total risk $X$ are nonnegative or nonpositive and we restrict the set of allocations to allocations satisfying $X_{i}\geq 0$ or $X_{i}\leq 0$,
both the constrained inf-convolution and the counter-monotonic inf-convolution are identified as risk metrics. Furthermore, when both $X$ and $X_i$ are general bounded random variables, the counter-monotonic inf-convolution is determined to be negative infinity.
The above results are mainly proved by using the technique of counter-monotonic theorem and Lemma \ref{lemma:convex}.

\begin{theorem}\label{theorem:n_convex}
Let $h\in \mathcal{H}$ be  convex. The following hold. 
\begin{itemize}
    \item[(i)] If $\mathcal X=L^+$ and $X\in \mathcal{X}^{\perp}$, then $$\dsquare_{i=1}^n \rho_{h}(X)= \dboxminus_{i=1}^{n}\rho_{h}(X)=  \rho_{g}(X),$$
    where $g(t)=n h(t/n)$ for $t\in [0,1]$. 
    
    \item[(ii)] If $\mathcal X=L^-$ and  $X\in \mathcal{X}^{\perp}$, then 
    $$\dsquare_{i=1}^n \rho_{h}(X)= \dboxminus_{i=1}^{n}\rho_{h}(X)= \rho_{g}(X),$$
    where $g(t)=nh(1-(1-t)/n)-nh(1-1/n)$ for $t\in [0,1]$.  
    \item[(iii)] 
    If $\mathcal X=L^\infty$,  $X\in \mathcal X^\perp$ and $h $ is  not the identity, then $$\dsquare_{i=1}^n \rho_{h}(X)= \dboxminus_{i=1}^{n}\rho_{h}(X)= -\infty.$$
\end{itemize} 
    Moreover,  in (i) and (ii), any uniform counter-monotonic allocation 
    is Pareto-optimal. 
\end{theorem}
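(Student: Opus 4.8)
The plan is to prove (i) and (ii) by sandwiching both inf-convolutions between two copies of $\rho_g$, and to prove (iii) by exhibiting an explicit counter-monotonic gamble whose aggregate value diverges. The single analytic fact that drives (i)--(ii) is that for convex $h$ the measure $\rho_h$ is \emph{antitone} in convex order, i.e. $X\le_{\mathrm{cx}} Y \Rightarrow \rho_h(X)\ge\rho_h(Y)$. I would record this first: using the Choquet duality $\rho_h(X)=-\rho_{\tilde h}(-X)$ (as in the proof of Lemma \ref{lemma:convex}), the concavity of $\tilde h$ (hence convex-order consistency of $\rho_{\tilde h}$ by the equivalence recalled in Section \ref{sec:pre}), and the fact that convex order is preserved under negation, one gets $X\le_{\mathrm{cx}} Y\Rightarrow -X\le_{\mathrm{cx}} -Y\Rightarrow \rho_{\tilde h}(-X)\le\rho_{\tilde h}(-Y)\Rightarrow \rho_h(X)\ge\rho_h(Y)$.

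With this in hand, consider part (i) with $\mathcal X=L^+$. Since $\dsquare\le\dboxminus$ always and every jackpot allocation $(X\mathbb{1}_{A_1},\dots,X\mathbb{1}_{A_n})$ is counter-monotonic and nonnegative, Lemma \ref{lemma:convex} gives $\dboxminus_{i=1}^n\rho_h(X)\le I(X)=\rho_g(X)$, so it remains to prove $\dsquare_{i=1}^n\rho_h(X)\ge\rho_g(X)$. For this I would take any allocation $(X_1,\dots,X_n)$ in $L^+$ and apply the counter-monotonic improvement theorem (Theorem \ref{theorem:counter_impro}) to produce a nonnegative jackpot allocation $(Y_1,\dots,Y_n)$ with $Y_i\ge_{\mathrm{cx}} X_i$; antitonicity then yields $\sum_i\rho_h(X_i)\ge\sum_i\rho_h(Y_i)\ge I(X)=\rho_g(X)$, the last step because $(Y_i)$ is of jackpot type. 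The sandwich $\rho_g\le\dsquare\le\dboxminus\le\rho_g$ closes (i). Part (ii) is the mirror image: for $X\le 0$ apply Theorem \ref{theorem:counter_impro} to $-X\ge 0$, negate the resulting jackpot allocation to obtain a scapegoat allocation of $X$ in $L^-$ (negation preserves both counter-monotonicity and convex order), and invoke the $L^-$ branch of Lemma \ref{lemma:convex}.

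For part (iii), the extra freedom of sign-unconstrained components in $L^\infty$ lets a single two-agent gamble drive the value to $-\infty$. Since $h$ is convex with $h(0)=0$, $h(1)=1$ but $h\ne\mathrm{id}$, there is $q$ with $h(q)<q$, whence $h(q)+h(1-q)<1$ (as convexity forces $h\le\mathrm{id}$). Choosing $A$ independent of $X$ with $\mathbb{P}(A)=q$ (available since $X\in\mathcal X^\perp$) and any $t\ge 2\|X\|_\infty$, I claim $(t\mathbb{1}_A,\,X-t\mathbb{1}_A,\,0,\dots,0)$ lies in $\mathbb{A}_n^-(X)$: its only non-degenerate pair is $(t\mathbb{1}_A,X-t\mathbb{1}_A)$, for which the bound $t\ge 2\|X\|_\infty$ forces the cross-term sign needed for counter-monotonicity, while every pair involving the constant $0$ is trivially counter-monotonic. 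Then by positive homogeneity, monotonicity ($X\le\|X\|_\infty$) and translation invariance, $\sum_i\rho_h(X_i)\le th(q)+\|X\|_\infty-t\tilde h(q)=\|X\|_\infty+t\bigl(h(q)+h(1-q)-1\bigr)\to-\infty$ as $t\to\infty$, giving $\dboxminus=\dsquare=-\infty$.

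The ``moreover'' claim is then immediate: a uniform counter-monotonic allocation realizes $\sum_i\rho_h(X\mathbb{1}_{A_i})=\rho_g(X)=\dsquare_{i=1}^n\rho_h(X)$ by Lemma \ref{lemma:convex} together with (i)--(ii), hence is sum-optimal, and sum-optimality coincides with Pareto optimality for monetary risk measures (Section \ref{sec:pre}). The main obstacle I anticipate is the lower bound $\dsquare\ge\rho_g$ in (i)--(ii): this is exactly where the correct direction of $\rho_h$ in convex order and the availability of a counter-monotonic \emph{improvement} (not merely a representation) are both indispensable, and one must ensure the improved allocation can be taken simultaneously nonnegative \emph{and} of jackpot type so that Lemma \ref{lemma:convex} applies verbatim.
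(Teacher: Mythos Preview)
Your treatment of (i)--(ii) is essentially the paper's own: the same sandwich $\rho_g\le\dsquare\le\dboxminus\le\rho_g$, the same use of the counter-monotonic improvement theorem for the lower bound, and the same appeal to Lemma~\ref{lemma:convex} for both the upper bound and the identification of the optimal value. Your explicit remark that convex $h$ makes $\rho_h$ antitone in convex order (via $\rho_h(X)=-\rho_{\tilde h}(-X)$) is exactly what the paper uses implicitly when it writes $\rho_h(X\mathbb{1}_{A_i})\le\rho_h(X_i)$.

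Your part (iii), however, is a genuinely different argument. The paper reduces to (i) by translation: for large $m$ with $X+m\ge 0$, translation invariance gives $\dboxminus_{i=1}^n\rho_h(X)\le\rho_g(X+m)-m$ with $g(t)=nh(t/n)$, and since $g(1)=nh(1/n)<1$ (convex $h$ not the identity), $\rho_g(X+m)-m=\rho_g(X)+m(g(1)-1)\to-\infty$. Your approach instead builds an explicit two-agent counter-monotonic gamble $(t\mathbb{1}_A,\,X-t\mathbb{1}_A,\,0,\dots,0)$ and exploits $h(q)+h(1-q)<1$ directly. Your route is more elementary and self-contained---it does not rely on the machinery of (i) or on Lemma~\ref{lemma:convex}, and in fact does not even need the independence of $A$ from $X$ (any $A$ with $\mathbb{P}(A)=q$ works, since your bound uses only monotonicity and translation invariance). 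The paper's route is more unified, as it recycles (i) and makes transparent why the constraint $\mathcal X=L^+$ is what prevents the value from collapsing to $-\infty$ (namely, it forbids exactly the shift $m\to\infty$). Both arguments are correct.
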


\begin{proof}
    (i) Suppose $\mathcal X=L^+$ and $X\in \mathcal X^{\perp}$. Let $(X_{1}, \ldots, X_{n})\in \mathbb{A}_{n}(X) $.
    By counter-monotonic improvement theorem, there always exists a jackpot allocation $(Y_{1}, \ldots, Y_{n} )\in \mathbb{A}_{n}(X)$ such that $Y_{i}=X\mathbb{1}_{A_{i}}$, 
    $Y_{i}\leq_{\mathrm{cv}} X_i$ and $Y_{i}\geq 0$ for each $i\in [n]$ and $(A_{1}, \ldots, A_{n})\in \Pi_{n}$. If $h\in \mathcal{H}$ is convex, then it holds $\rho_{h}(X\mathbb{1}_{A_{i}})\leq \rho_{h}(X_{i})$ for all $i\in[n]$.
    Denote by $I(X)=\inf \left\{\sum_{i=1}^{n}\rho_{h}(X\mathbb{1}_{A_{i}}), (A_{1}, \ldots, A_{n})\in \Pi_{n}\right\}$. 
    It follows that 
    \begin{align}\label{ineq:convex}
        \dboxminus_{i=1}^{n} \rho_{h}(X) \leq I(X) \leq \dsquare_{i=1}^{n} \rho_{h}(X).
    \end{align}
    Also, it is straightforward to verify $\dsquare_{i=1}^{n} \rho_{h}(X) \leq \dboxminus_{i=1}^{n} \rho_{h}(X)$. From Lemma \ref{lemma:convex}, $I(X)$ is determined by $ \rho_{ h^{*}}(X)$, where $h^{*}(t)=n h(t/n)$, $t\in[0,1]$.

    (ii) 
    Next we assume $\mathcal X=L^-$ and $X\in \mathcal X^{\perp}$. 
    By counter-monotonic improvement theorem, for any $(X_{1}, \ldots, X_{n})\in \mathbb{A}_{n}(X)$, there exists a jackpot allocation $(Y_{1}, \ldots, Y_{n})\in \mathbb{A}_{n}(X)$ such that $Y_{i}\leq_{\mathrm{cv}} X_i$ and $Y_{i}\leq 0$ for each $i\in [n]$.
    Similarly with the analysis in (i), the inequality (\ref{ineq:convex}) also holds true in the case of $\mathcal X=L^-$ but with different $I(X)$. From (ii) of Lemma \ref{lemma:convex}, we can obtain 
    \begin{align}\label{ineq:neg}
        \dboxminus_{i=1}^{n} \rho_{h}(X) =\dsquare_{i=1}^{n} \rho_{h}(X)= \rho_{g}(X).
    \end{align}
    where $g(t)=nh(1-(1-t)/n)-nh(1-1/n)$. 

    (iii) In this case, we consider $\mathcal{X}=L^\infty$ and  $X\in \mathcal{X}^{\perp}$. Take $m\in \mathbb{R}_{+}$ large enough such that $X+m\ge 0$. By translation invariance of $\rho_h$, it holds that
    \begin{align*}
        \dboxminus_{i=1}^{n} \rho_{h}(X)
        &= \inf \left\{\sum_{i=1}^{n}\rho_{h}\left(X_{i}+\frac{m}{n}\right): (X_{1}, \ldots, X_{n})\in \mathbb{A}_{n}^{-}(X)\right\}-m\\
        &=  \inf \left\{\sum_{i=1}^{n}\rho_{h}(Y_{i}): (Y_{1}, \ldots, Y_{n})\in \mathbb{A}_{n}^{-}(X+m)\right\}-m \\
        &\le \inf \left\{\sum_{i=1}^{n}\rho_{h}(Y_{i}): (Y_{1}, \ldots, Y_{n})\in \mathbb{A}_{n}^{-}(X+m),~Y_1,\dots,Y_n\ge 0\right\}-m 
     \\&= \rho_g(X+m)-m ,
    \end{align*}
where $g$ is given by $g(t)=nh(t/n)$ for $t\in [0,1]$, and the last equality is due to part (i).
 Since $h$ is convex and not the identity, we have $g(1)<1$, and hence $ 
    \rho_g(X+m)-m  = \rho_g(X)+mg(1)-m \to -\infty$ as $m\to\infty$.
  Therefore,  $\dboxminus_{i=1}^{n} \rho_{h}(X) 
  =-\infty$.
\end{proof}

    When the total risk $X$ is either nonnegative or nonpositive, we observe that  $\rho_g$  in both  (i) and (ii) of Theorem \ref{theorem:n_convex} is no longer a distortion risk measure as $g(1) \neq 1$, but it generally belongs to the class of distortion riskmetrics. 
    Given a convex $h$, 
    it directly follows from Theorem \ref{theorem:n_convex} that $\rho_g(X) \leq \rho_h(X)$ since $\dsquare_{i=1}^{n}\rho_{h}(X)=\dboxminus_{i=1}^{n}\rho_{h}(X)\leq \rho_{h}(X)$. 
     Notably, 
     in the case of $X$ being nonpositive, $\rho_g(X) \leq \rho_h(X)$ holds even though $g \geq h$; recall that if $f,h\in \mathcal H$, then 
    Lemma 1 of \cite{wang2020characterization} implies that $\rho_{f} \le \rho_{h}$ if and only if $f\le h$, and this emphasizes  $g\not \in \mathcal H$. 
     We provide a numerical example to show the relation of $\dsquare_{i=1}^n \rho_{h} \leq \rho_h$ with $h$ being convex. In this example, we consider a scenario with two agents in the pool. 
     Take $h(t)=\Phi(\Phi^{-1}(t)+\lambda)$ and $\lambda=-0.6$. Some numerical results are presented in Table \ref{table:3}.
\begin{table}[ht!]
\renewcommand{\arraystretch}{1.5}
\centering
\begin{tabular}{c|c|c|c|cc} 
 & $X$ & $\X$ &$\rho_h (X)=  \dboxplus_{i=1}^2 \rho_{h}(X)$ &  $\dboxminus_{i=1}^2 \rho_{h}(X)= \dsquare_{i=1}^2 \rho_{h}(X)$\\ 
\hline
\multirow{2}{*}{$Y \sim \text{Uniform}(0,1)$ } &$Y$ & $L^+$  & 0.3317 & 0.1903  \\
& $-Y$ & $L^-$ & -0.6609 & -0.8776  \\
\hline
\multirow{2}{*}{$Y \sim \text{Pareto}(3,2)$} &$Y$ & $L^+$   & 2.4743 & 1.4406  \\
& $-Y$  & $L^-$ & -3.6044 & -4.9292 \\
\hline
\multirow{2}{*}{$Y \sim \text{logN}(0,1)$} &$Y$ & $L^+$  & 0.92704 & 0.6408  \\
& $-Y$  & $L^-$ & -3.0062 & -3.5515 \\
\hline
\end{tabular}
\label{table:3}
\caption{Comparison of the three inf-convolutions.}
\end{table}

\begin{example}
We provide an example to illustrate how inf-convolutions vary with different risk preferences by specifying the distortion function as $h(t)=1-(1-t)^{\alpha}, \alpha\in \mathbb{R}_{+}$. It is clear that agents exhibit risk-seeking (RS) behavior with $\alpha < 1$ due to the convexity of $h$, whereas they are risk-averse (RA) with $\alpha > 1$ and risk-neutral (RN) when $\alpha =1$, indicating a linear $h$. 
It is well-known that $\dboxplus_{i=1}^{n}\rho_{h}$ is always equal to $\rho_{h}$, regardless of whether $h$ is concave or convex, as shown in the blue line in Figure \ref{fig:inf_convolution}.
From Theorem \ref{theorem:counter_two}, these three inf-convolutions are equal when $h$ is concave, i.e., $\alpha >1$ in this case.
Consequently, they share the same blue line in scenarios where agents are risk-averse. 
Furthermore, Theorem \ref{theorem:n_convex} shows 
that $\rho_{h}$ is consistently greater than both the counter-monotonic inf-convolution and the unconstrained one. 
This leads to the depicted deviation, represented by the red line in Figure \ref{fig:inf_convolution}, for risk-seeking agents. 


\begin{figure}[ht]
  \centering
\begin{tikzpicture}
\begin{axis}[
    axis lines=middle,
    enlargelimits=true,
    clip=false,
    axis line style={-stealth},
    no marks,
    xmin = 0.45, ymin =0.14,
    xtick=\empty,  
    ytick=\empty,  
    xlabel = $\alpha$,
    xlabel style={at={(axis description cs:1,0)}, anchor=north},
    ylabel = \empty,
    axis line style={line width=0.8pt}, 
]
\fill [blue!10] (axis cs:0.2,0.06) rectangle (axis cs:1,1);
\fill [black!10] (axis cs:1,0.06) rectangle (axis cs:3,1);
\addplot+[thick, color=blue, opacity=0.7, domain=0.2:1, samples=100] {1 - (1 - 3/4)^x};
\addplot+[thick, color=blue, opacity=0.7, domain=1:3, samples=100] {1 - (1 - 3/4)^x};
\node at (axis cs:2.2,0.8) [anchor=north, black] {$ \dboxminus_{i=1}^{n}\rho_{h_A}=\dsquare_{i=1}^{n}\rho_{h_A}=\dboxplus_{i=1}^{n}\rho_{h_A}$};

\node at (axis cs:0.5,0.75) [anchor=north, black] {$\dboxplus_{i=1}^{n}\rho_{h_S}$};

\node at (axis cs:1.2,0.35) [anchor=north, black] {$\dboxminus_{i=1}^{n}\rho_{h_S}=\dsquare_{i=1}^{n}\rho_{h_S}$};
\addplot+[thick, color=red, opacity=0.6, domain=0.2:1, samples=1000] {20*(1 - (1-3/4*1/20)^x)};
\draw [dashed, thick] (axis cs:1,0.06) -- (axis cs:1,0.99);

\node at (axis cs:1.2,0.05) [anchor=north] {1 (RN)};

\node at (axis cs:0.5,0.05) [anchor=north, black] {RS};

\node at (axis cs:2,0.05) [anchor=north, black] {RA};
\end{axis}
\end{tikzpicture}
  \caption{Comparison of inf-convolutions evaluated at $X\sim \mathrm{Uniform}[0,1]$ for agents with different risk attitudes.}
  \label{fig:inf_convolution}
\end{figure}
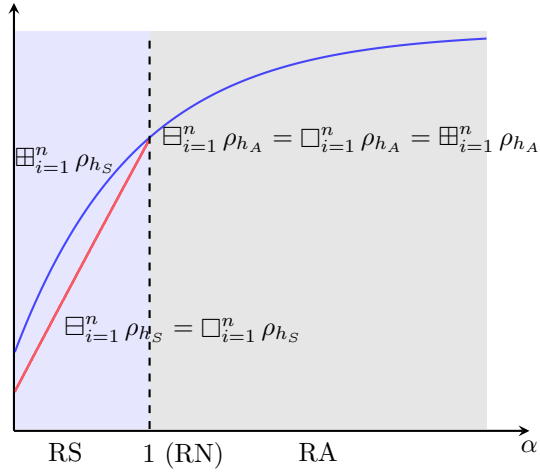
\end{example}

It is natural to wonder about the connections among comonotonicity, counter-monotonicity and Pareto optimality. Next, we characterize Pareto-optimal allocations of the risk sharing problem with agents exhibiting risk-averse or risk-seeking behaviors.
The following proposition shows that, within the framework of distortion risk measures, every comonotonic allocation is Pareto optimal for all risk-averse agents, and the converse holds true for strictly concave distortion functions. In contrast, for risk-seeking agents, Pareto-optimal allocations must be jackpot (scapegoat) allocations if the total risk $X$ is nonnegative (nonpositive), complementing the result in Theorem \ref{theorem:n_convex}. 

\begin{proposition}
\label{prop:summary}
    Assume $h\in \mathcal{H}$. 
    The following statements hold.
    \begin{itemize}
        \item[(i)] Suppose $\mathcal{X} = L^\infty$ and $X\in \mathcal{X}$. 
        If $h$ is concave, then all comonotonic allocations of $X$ are Pareto optimal.
        If $h$ is strictly concave, then 
        all 
         Pareto-optimal allocations of $X$ are comonotonic.
        \item[(ii)]Suppose $\mathcal{X} = L^{+}$ (resp.~$\X=L^{-}$) and $X\in \mathcal{X}^{\perp}$. If $h$ is convex, then all Pareto-optimal allocations are jackpot (resp.~scapegoat) allocations.
    \end{itemize}
\end{proposition}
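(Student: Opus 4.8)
The plan is to handle both directions of (i) and the claim of (ii) through one common mechanism: pair an improvement theorem (comonotonic in the concave case, counter-monotonic in the convex case) with a \emph{strict} form of convex-order consistency of $\rho_h$, which is precisely where strict concavity/convexity gets consumed, and then upgrade ``equality of all marginal laws'' to the desired dependence structure. The forward half of (i) is immediate: if $(X_1,\dots,X_n)$ is comonotonic, comonotonic additivity gives $\sum_i\rho_h(X_i)=\rho_h(\sum_i X_i)=\rho_h(X)$, and for concave $h$ Theorem~\ref{th:h_n}(ii) yields $\dsquare_{i=1}^n\rho_h(X)=\rho_h(X)$, so the allocation attains the unconstrained infimum; it is thus sum-optimal, hence Pareto optimal since the two notions coincide for monetary risk measures (\citet[Proposition 1]{embrechts2018quantile}).

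For the converse in (i), I would first record the crux lemma: for strictly concave $h$, $\rho_h$ is \emph{strictly} convex-order consistent, i.e. $Y\leq_{\mathrm{cx}}X$ together with $Y,X$ having distinct laws forces $\rho_h(Y)<\rho_h(X)$. Using the quantile representation $\rho_h(X)=\int_0^1 F_X^{-1}(t)\,\phi(t)\,\d t$, where the weight $\phi$ is strictly increasing because $h$ is strictly concave, an integration by parts gives $\rho_h(X)-\rho_h(Y)=\int_0^1 G(t)\,\phi'(t)\,\d t$ with $G(t)=\int_t^1 (F_X^{-1}-F_Y^{-1})\geq 0$ and $G(0)=G(1)=0$, which is strictly positive unless $G\equiv 0$, i.e. unless $X\laweq Y$. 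Then, given a Pareto-optimal (hence sum-optimal, with value $\rho_h(X)$) allocation $(X_1,\dots,X_n)$, I apply the comonotonic improvement theorem (\cite{landsberger1994co}) to get a comonotonic $(Y_1,\dots,Y_n)\in\mathbb{A}_n(X)$ with $Y_i\leq_{\mathrm{cx}}X_i$. Comonotonic additivity gives $\sum_i\rho_h(Y_i)=\rho_h(X)=\sum_i\rho_h(X_i)$ while consistency gives $\rho_h(Y_i)\leq\rho_h(X_i)$ termwise, so all inequalities are equalities and the strict lemma forces $Y_i\laweq X_i$ for every $i$. Hence $(Y_1,\dots,Y_n)$ is a comonotonic rearrangement of the marginals of $(X_1,\dots,X_n)$ with the same sum $X$, so $\sum_i X_i$ has the same law as its comonotonic rearrangement; the characterization of comonotonicity via the law of the sum (\cite{dhaene2002concept}) then gives that $(X_1,\dots,X_n)$ is comonotonic.

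For (ii) I run the mirror argument. Take $\mathcal X=L^+$, $X\in\mathcal X^{\perp}$ and $h$ strictly convex; the same quantile computation, now with $\phi$ strictly \emph{decreasing}, shows $\rho_h$ is strictly convex-order anti-consistent ($X_i\leq_{\mathrm{cx}}Y_i$ with distinct laws forces $\rho_h(Y_i)<\rho_h(X_i)$). Let $(X_1,\dots,X_n)$ be Pareto optimal, so $\sum_i\rho_h(X_i)=\dsquare_{i=1}^n\rho_h(X)=\rho_g(X)$ by Theorem~\ref{theorem:n_convex}(i). The counter-monotonic improvement theorem (Theorem~\ref{theorem:counter_impro}) yields a jackpot allocation $Y_i=X\mathbb{1}_{A_i}$, $(A_1,\dots,A_n)\in\Pi_n$, with $Y_i\geq_{\mathrm{cx}}X_i$; anti-consistency together with the lower bound $\sum_i\rho_h(Y_i)\geq\dboxminus_{i=1}^n\rho_h(X)=\rho_g(X)$ forces $\rho_h(Y_i)=\rho_h(X_i)$, hence $Y_i\laweq X_i$ for all $i$. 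I then convert equal laws into jackpot structure by a counting argument: since $X_i\geq 0$ we have $\{X>0\}=\bigcup_i\{X_i>0\}$, so $\mathbb{P}(X>0)\leq\sum_i\mathbb{P}(X_i>0)=\sum_i\mathbb{P}(Y_i>0)=\mathbb{P}(X>0)$, where the last equality holds because the $A_i$ partition $\{X>0\}$. Equality in the union bound forces the events $\{X_i>0\}$ to be a.s.\ disjoint, so at a.e.\ $\omega$ at most one component is positive, and with $\sum_i X_i=X$ this gives $X_i=X\mathbb{1}_{A_i'}$ for $A_i'=\{X_i>0\}$, a jackpot allocation. The case $\mathcal X=L^-$ is symmetric, with $\{X_i<0\}$ and scapegoat replacing $\{X_i>0\}$ and jackpot.

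The main obstacle I anticipate is the strict convex-order (anti-)consistency lemma, since it is exactly where strictness of $h$ is used and the integration-by-parts/quantile argument must be made rigorous for general $h\in\mathcal H$ (bounded variation, possibly nondifferentiable, with atoms in the marginals allowed). A secondary technical point is the step in (i) passing from ``all marginals and the sum have the comonotonic laws'' to comonotonicity, which rests on the law-of-the-sum characterization and may require a continuity caveat on the marginals; in (ii) the analogous step is handled more cheaply by the union-bound disjointness argument above. Finally, I would read (ii) with $h$ \emph{strictly} convex: for the identity $h$ (risk neutrality) every allocation is sum-optimal and the conclusion fails, mirroring the ``not the identity'' hypothesis in Theorem~\ref{theorem:n_convex}(iii).
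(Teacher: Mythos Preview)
Your proof is correct but takes a genuinely different route from the paper. The paper's proof is essentially two citations: part (i)'s converse is deferred to \citet[Proposition 4]{lauzier2023risk}, and part (ii) is deferred to \citet[Theorem 2]{lauzier2024negatively}. You instead give self-contained arguments: in (i) you pair comonotonic improvement with a \emph{strict} convex-order consistency lemma to force $Y_i\laweq X_i$, and then invoke the law-of-the-sum characterization of comonotonicity; in (ii) you run the mirror argument with counter-monotonic improvement and close with a clean union-bound disjointness step. Both arguments are sound, and you correctly flag the two delicate points (rigorizing the strict-consistency lemma for general $h\in\mathcal H$, and the sum-characterization of comonotonicity, for which a precise reference such as Cheung (2010, \emph{IME}) is safer than \cite{dhaene2002concept}). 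What your approach buys is transparency about where strictness is consumed; what the paper's approach buys is brevity. Your closing remark that (ii) as stated fails for the identity $h$ is well taken: the cited result in \cite{lauzier2024negatively} presumably carries a strictness or non-degeneracy hypothesis that the proposition's statement here suppresses, and your argument makes that dependence explicit.
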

\begin{proof}
     (i) The first statement follows from Theorem \ref{th:three_h} and comonotonic additivity of $\rho_h$.  The second statement follows from Proposition 4 of \cite{lauzier2023risk}. 
     (ii) When $\mathcal{X}=L^{+}$, the result directly follows Theorem 2 of \cite{lauzier2024negatively}. The proof for nonpositive case is analogous and is thus omitted.
\end{proof}

\section{Portfolio manager's problem}\label{sec:port}

In this section, we analyze a portfolio optimization problem, showing how the risk preferences of agents influence their decision-making regarding risky investments. In the market, there is a portfolio manager and a group of agents who collectively possess a constant initial endowment of  $W$ and homogeneous individual risk preferences, which can vary from highly risk-averse to potentially risk-seeking. The task of a portfolio manager is to manage the investments for a group of clients, aiming to optimize the collective portfolio in a way that aligns with the risk preference of the agents and to minimize the aggregate risks. The investment payoff from a risky asset, such as a stock or mutual fund, after the investment period at time $1$ is modelled by a random variable $X\ge 0$. The first question for the manager is to construct an investment strategy $\lambda X$, where $\lambda$ represents the total proportion of the total investment to allocate to the risky asset.
This investment has a cost   $c(\lambda)$ incurred from investing in the risky asset and $c$ is   assumed to be increasing and convex, meaning that a larger leverage is marginally more costly (see examples in \cite{follmer2002convex} and \cite{castagnoli2022star}). 
The second task is to allocate the total investment  wealth $W+\lambda X-c(\lambda)$ at time $1$ to the participants via an allocation $(X_1,\dots,X_n)$ of wealth.

To summarize, the goal of the manager, taking into account both the investment problem and the allocation problem, is to optimize the following objective function
\begin{align}
\begin{aligned} 
\text{to minimize}~~ &  \sum_{i=1}^n \rho_{h}(-X_i)  
 \\  
\text{subject to}~~& \lambda \in [0,1],~~
 c(\lambda)\leq W; \\&  X_1+\dots + X_n=
 W+\lambda X-c(\lambda); ~~X_1,\dots,X_n\ge 0.
 \end{aligned} \label{problem:por} 
\end{align}

In this model, the constraint $X_1,\dots,X_n\ge 0$ means that the manager does not give additional loss to the participants at time $1$. In other words, the participation fund represented by $W$ has been collected in the beginning of the investment period.
To avoid infeasibility, we assume  $c(\lambda)\leq W$ to ensure the total wealth $W+\lambda X-c(\lambda)$ remains nonnegative. 
We will determine the optimal proportion $\lambda$ of risky investments in the portfolio, which is dependent on the risk attitudes of agents, as shown in the subsequent result.
We aim to investigate how these risk preferences influence the allocation of risky investments. Specifically, we consider scenarios where agents are either risk-averse or risk-seeking.

\begin{proposition}\label{prop:lambda}
    Suppose $\mathcal{X}=L^+$ and $h\in \mathcal{H}$.
    For $X\in \mathcal{X}^{\perp}$, the following hold.
    \begin{itemize}
        \item[(i)] If $h$ is concave, the optimal value for problem \eqref{problem:por} is 
        \begin{align*}
            \lambda^{*} = \min \left\{
    {c^{\prime}}^{-1}(\rho_{\tilde{h}}(X)), {c^{\prime}}^{-1}(W)
    \right\}, \quad  \text{where} \  \tilde{h}(t)=1-h(1-t).
        \end{align*}
         \item[(ii)] If $h$ is convex, the optimal value is 
         \begin{align*}
         \lambda^{*} = \min \left\{{c^{\prime}}^{-1}(\rho_{g}(X)), {c^{\prime}}^{-1}(W)\right\}, \quad \text{where} \ g(t)=\frac{1-h(1-t/n)}{1-h(1-1/n)}.
        \end{align*}
    \end{itemize}
\end{proposition}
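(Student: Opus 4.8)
The plan is to decouple the two layers of \eqref{problem:por}: for each admissible $\lambda$, first solve the inner allocation problem, then minimize the resulting one-dimensional function of $\lambda$. Writing $Z=W+\lambda X-c(\lambda)$ and substituting $Y_i=-X_i$, the inner problem becomes $\inf\{\sum_{i=1}^n\rho_h(Y_i):\sum_i Y_i=-Z,\ Y_i\le 0\}$, which is exactly the unconstrained inf-convolution $\dsquare_{i=1}^n\rho_h(-Z)$ taken on $\mathcal X=L^{-}$. Before invoking the earlier theorems I would record two feasibility facts: since $X\ge 0$ and $c(\lambda)\le W$ we have $Z\ge \lambda X\ge 0$, so $-Z\in L^{-}$; and since $Z$ is an affine function of $X$ for $\lambda>0$ and $X\in\mathcal X^{\perp}$, the same independent uniform serves for $-Z$, so $-Z\in\mathcal X^{\perp}$ (the case $\lambda=0$ is degenerate and handled directly).

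With the inner problem identified, I would apply the relevant closed forms. In case (i), $h$ is concave, so Theorem \ref{th:h_n}(ii) gives $\dsquare_{i=1}^n\rho_h(-Z)=\rho_h(-Z)$. In case (ii), $h$ is convex and Theorem \ref{theorem:n_convex}(ii) gives $\dsquare_{i=1}^n\rho_h(-Z)=\rho_{g_0}(-Z)$, where $g_0(t)=nh(1-(1-t)/n)-nh(1-1/n)$; attainment in both cases is guaranteed (for instance by a uniform counter-monotonic allocation), so the infimum is a genuine value assumed at a feasible allocation with $X_i\ge 0$.

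The next step is to turn $\rho_h(-Z)$ (resp.\ $\rho_{g_0}(-Z)$) into an explicit convex function of $\lambda$. Using comonotonic additivity to split off the constant $c(\lambda)-W$ and positive homogeneity to pull out $\lambda$, together with the duality $\rho_f(-X)=-\rho_{f^{\circ}}(X)$ for $f^{\circ}(t)=f(1)-f(1-t)$ (which for $f=h\in\mathcal H$ is $f^{\circ}=\tilde h$), case (i) reduces to minimizing $c(\lambda)-\lambda\,\rho_{\tilde h}(X)-W$. For case (ii) the key bookkeeping is the identity $\check g_0(t):=g_0(1)-g_0(1-t)=n\tilde h(t/n)$ with $g_0(1)=n\tilde h(1/n)$; after factoring out $g_0(1)$ this shows that the per-unit valuation $\rho_{\check g_0}(X)/g_0(1)$ equals $\rho_{g}(X)$ with $g(t)=\bigl(1-h(1-t/n)\bigr)/\bigl(1-h(1-1/n)\bigr)$, exactly the distortion function in the statement. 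I expect this distortion-function computation to be the main obstacle, since it is where the stated $g$ must be matched precisely and where a sign or normalization slip would be easy.

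Finally, each reduced objective is convex in $\lambda$ (the $\lambda$-dependence outside $c$ is linear and $c$ is convex), so its first-order condition $c'(\lambda)=\rho_{\tilde h}(X)$ in case (i), and $c'(\lambda)=\rho_g(X)$ in case (ii), identifies the unconstrained minimizer $(c')^{-1}(\cdot)$. Imposing the feasibility constraints $\lambda\in[0,1]$ and $c(\lambda)\le W$ then caps this minimizer, and since $(c')^{-1}$ is increasing the optimum takes the displayed form $\lambda^{*}=\min\{(c')^{-1}(\cdot),(c')^{-1}(W)\}$. I would double-check the boundary case where the wealth/leverage constraint binds, as this is the one place where the precise interaction between $c(\lambda)\le W$ and the convex objective must be argued carefully to produce the second argument of the minimum.
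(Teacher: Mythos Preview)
Your proposal is correct and follows essentially the same approach as the paper: decouple the two layers, reduce the inner allocation problem to the inf-convolution on $L^{-}$, invoke Theorem~\ref{th:h_n} (concave case) and Theorem~\ref{theorem:n_convex}(ii) (convex case), simplify $\rho_h(-Z)$ or $\rho_{g_0}(-Z)$ to a convex function of $\lambda$, and apply the first-order condition with the constraint. In fact you supply considerably more detail than the paper, which only writes out part~(i) and leaves part~(ii) to the reader; your distortion-function bookkeeping for part~(ii) (computing $\check g_0$ and matching it to the stated $g$) is exactly the computation the paper suppresses.
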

\begin{proof}
    (i) Denote by $f(\lambda)=\rho_{h}(-W-\lambda X+c(\lambda))$. 
    From concavity of $h$ and Theorem \ref{th:h_n}, problem \eqref{problem:por} becomes finding the minimum of $f(\lambda)$ over $\lambda\in [0,1)$ with constraint of $c(\lambda)\leq W$. Furthermore, the function $f(\lambda)$ can be rewritten as
    \begin{align*}
        f(\lambda)= -\lambda\rho_{\tilde{h}}(X)+ c(\lambda)-W.
    \end{align*}
    Take derivative with respect to $\lambda$, we have 
    $f^{\prime}(\lambda) = -\rho_{\tilde{h}}(X)+ c^{\prime}(\lambda)=0$. 
    Consider the restriction of $c(\lambda)\leq W$, the optimal value $\lambda^*$ is then determined by $\lambda^{*} = \min  \{
    {c^{\prime}}^{-1}(\rho_{\tilde{h}}(X)), {c^{\prime}}^{-1}(W)
    \}$, where $\tilde{h}(t)=1-h(1-t)$.

    (ii) The result can be proved using Theorem \ref{theorem:n_convex} and convexity of $h$. Its proof is similar to (i) and thus omitted.
\end{proof}

An immediate implication of the above results is that for risk-averse agents, the optimal proportion of risky investment remains unaffected by the number of agents involved in the pool, whereas this is not the case for risk-seeking agents. In the following proposition, we present a necessary and sufficient condition for the optimal strategy to be independent with $n$ for risk-seeking agents. 
The assumptions in Proposition \ref{prop:lambda} are maintained in the following results. 
\begin{proposition}\label{prop:power}
    If $h$ is convex, the optimal value $\lambda^*$ is independent of $n$ if and only if $h$ is the dual-power transform,  $h(t)=1-(1-t)^{\alpha}$ with $\alpha \in (0,1]$.
\end{proposition}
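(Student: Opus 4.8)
The plan is to start from Proposition \ref{prop:lambda}(ii), which gives $\lambda^{*}=\min\{{c'}^{-1}(\rho_{g_n}(X)),{c'}^{-1}(W)\}$, where I write $g_n(t)=\frac{1-h(1-t/n)}{1-h(1-1/n)}$ to make the dependence on $n$ explicit. Since ${c'}^{-1}$ is a fixed strictly increasing function and the term ${c'}^{-1}(W)$ does not involve $n$, the value $\lambda^{*}$ is independent of $n$ (for all admissible $X$, $W$, $c$) if and only if $\rho_{g_n}(X)$ is independent of $n$ for every $X\in\mathcal{X}^{\perp}$. Because the space is atomless, testing against indicators $X=\mathbb{1}_A$ with $\mathbb{P}(A)=p$ yields $\rho_{g_n}(\mathbb{1}_A)=g_n(p)$, so $n$-independence of $\rho_{g_n}(X)$ for all such $X$ is equivalent to $g_n$ itself being independent of $n$ (one may also invoke Lemma 1 of \cite{wang2020characterization}). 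This reduces the statement to: $g_n$ is independent of $n$ if and only if $h(t)=1-(1-t)^{\alpha}$ for some $\alpha\in(0,1]$. The ``if'' direction is a direct computation: for the dual-power transform $1-h(s)=(1-s)^{\alpha}$, so $1-h(1-t/n)=(t/n)^{\alpha}$ and $1-h(1-1/n)=(1/n)^{\alpha}$, whence $g_n(t)=t^{\alpha}$ for every $n$.

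For the converse I would pass to the dual function $\phi:=\tilde h$, i.e. $\phi(t)=1-h(1-t)$. Since $h$ is convex and increasing with $h(0)=0$, $h(1)=1$, the function $\phi$ is concave and increasing with $\phi(0)=0$, $\phi(1)=1$; concavity together with $\phi(0)=0$ makes $t\mapsto\phi(t)/t$ non-increasing, so $\phi(t)\ge t>0$ on $(0,1]$. In this notation $g_n(t)=\phi(t/n)/\phi(1/n)$, and in particular $g_1=\phi$. Hence $n$-independence of $g_n$ means $g_n=g_1=\phi$ for all $n$, which is exactly the functional equation $\phi(t/n)=\phi(t)\,\phi(1/n)$ for all $t\in[0,1]$ and all $n\in\mathbb{N}$.

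The core of the argument is solving this equation. Taking $t=1/m$ gives $\phi(1/(nm))=\phi(1/n)\phi(1/m)$, so $\psi(n):=\phi(1/n)$ is completely multiplicative with $\psi(1)=1$; since $\phi$ is increasing, $\psi$ is non-increasing, and a non-increasing completely multiplicative function on $\mathbb{N}$ must be a power, $\psi(n)=n^{-\beta}$ for some $\beta\ge 0$ (with $\beta\le 1$ because $\psi(n)=\phi(1/n)\ge 1/n$). Rewriting the equation as $\phi(nt)=n^{\beta}\phi(t)$ whenever $nt\le 1$ and combining it with $\phi(t/n)=n^{-\beta}\phi(t)$, I obtain $\phi(rt)=r^{\beta}\phi(t)$ for every rational $r\in(0,1]$ with $rt\le 1$; setting $t=1$ gives $\phi(r)=r^{\beta}$ on the rationals of $(0,1]$. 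Continuity of the concave function $\phi$ then upgrades this to $\phi(t)=t^{\beta}$ for all $t\in[0,1]$, i.e. $h(s)=1-(1-s)^{\beta}$. Convexity of $h$ finally forces $\beta\in(0,1]$, since $h''(s)=\beta(1-\beta)(1-s)^{\beta-2}$ and $\beta>0$ because $h$ is genuinely increasing; this identifies $h$ with the dual-power transform, $\alpha=\beta\in(0,1]$.

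The main obstacle I anticipate lies in this functional-equation step: one must (i) justify rigorously that a monotone completely multiplicative sequence is exactly a power (the standard comparison-of-exponents argument, as in the monotone Cauchy equation), and (ii) bootstrap the power law from the integer-scaling relations to all rationals and then, via the regularity supplied by concavity, to all of $[0,1]$. A secondary but essential technical point is the reduction from ``$\lambda^{*}$ independent of $n$'' to ``$g_n$ independent of $n$'', which relies on atomlessness (to realize every $\mathbb{P}(A)=p$) and on strict monotonicity of ${c'}^{-1}$. Finally, I would note that the degenerate boundary case $\beta=0$ (where $\rho_h$ degenerates to an essential supremum) is excluded precisely because $h$ is required to be a genuine increasing convex distortion, which is consistent with the asserted range $\alpha\in(0,1]$.
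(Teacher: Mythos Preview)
Your reduction to the functional equation is the same as the paper's: both arrive at $g_n(t)=\tilde h(t/n)/\tilde h(1/n)$ and at the identity $\tilde h(t/n)=\tilde h(1/n)\,\tilde h(t)$ for all $t\in[0,1]$ and $n\in\mathbb N$. From there the two arguments diverge. The paper differentiates both sides in $t$, divides to obtain $\ell(t)=n\,\ell(t/n)$ for $\ell=\tilde h/\tilde h'$, asserts that this forces $\ell(t)=\beta t$, and then integrates the resulting ODE $\tilde h'/\tilde h=\alpha/t$ to get $\tilde h(t)=t^{\alpha}$. You instead stay at the level of the original equation, extract the completely multiplicative sequence $\psi(n)=\tilde h(1/n)$, pin it down as $n^{-\beta}$ via the monotone Cauchy/Erd\H os argument, bootstrap to rationals, and conclude by continuity. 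Your route avoids the implicit differentiability assumption on $\tilde h$ (convexity only gives a.e.\ differentiability) and makes the ``$\ell(t)=n\ell(t/n)\Rightarrow\ell$ linear'' step, which the paper takes for granted, fully explicit; the paper's ODE approach is shorter but more formal. One caveat: your exclusion of the boundary case $\beta=0$ on the grounds that $h$ must be ``genuinely increasing'' is not airtight, since $h=\mathbb 1_{\{t=1\}}$ is in $\mathcal H$, convex, and also yields $g_n$ independent of $n$; this degenerate case is tacitly excluded in the paper as well (the derivative argument there breaks down), so it is not a defect of your approach specifically.
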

\begin{proof}
    For the ``if" part, it is trivial to show the optimal value $\lambda^*$ is unrelated to $n$ by (ii) of Proposition \ref{prop:lambda}. Furthermore, the optimal value $\lambda^*$ is determined as 
    \begin{align*}
        \lambda^{*} = {c^{\prime}}^{-1}\left(\rho_{g}(X)\right), ~~~ \text{where}~ g(t)=\tilde{h}(t)=x^\alpha.
    \end{align*}

    Next, we show the ``only if" part. From Proposition \ref{prop:lambda}, independence between  the optimal value $\lambda^*$ and $n$ for $h$ being convex is equivalent to 
    independence between 
    $g(t)=(1-h(1-t/n))/(1-h(1-1/n))$ and $n$. The function $g(x)$ can be rewritten as $g(t)= \tilde{h}(t/n)/\tilde{h}(1/n)$. It follows that the equality 
    \begin{align*}
        \frac{\tilde{h}\left(\frac{t}{n_1}\right)}{\tilde{h}\left(\frac{1}{n_1}\right)}=\frac{\tilde{h}\left(\frac{t}{n_2}\right)}{\tilde{h}\left(\frac{1}{n_2}\right)}
    \end{align*}
    holds for any $n_1 \neq n_2  \in \mathbb{Z}$. Take $n_1 = 1$ and $n_2 = n$. It follows   that 
    \begin{align}\label{eq:1}
        \tilde{h}\left(\frac{t}{n}\right) = \tilde{h}\left(\frac{1}{n}\right) \tilde{h}(t).
    \end{align}
    Taking the derivative with respect to $x$ on both sides, we get 
    \begin{align}\label{eq:2}
        \frac{1}{n}\tilde{h}^{\prime}\left(\frac{t}{n}\right) = \tilde{h}\left(\frac{1}{n}\right) \tilde{h}^{\prime}(t).
    \end{align}
    Dividing \eqref{eq:1} by \eqref{eq:2}, we get 
    \begin{align*}
    \ell (t)=n  \ell \left(\frac{t}{n}\right), ~~~ \text{where} \  \ell (t)=\frac{\tilde{h}(t)}{\tilde{h}^{\prime}(t)}.
    \end{align*}
    Hence,  the above property implies that $
    \ell$  has the form of $ \ell (t)=\beta t$ for some $\beta\ge 0$ and all $t\in (0,1]$. As a consequence,
    \begin{align*}
        \frac{\tilde{h}^{\prime}(t)}{\tilde{h}(t)} = \frac{\alpha}{t}, \, ~~\text{where} \ \alpha =\frac{1}{\beta}
    \end{align*}
    and thus $\log \tilde{h}(t)=\alpha \log t$. Therefore, the distortion function $h(t)$ has the form of $h(t)=1-(1-t)^{\alpha}$ where $\alpha \in (0,1]$. 
\end{proof}

We give two simple examples to see how the optimal proportion of risky investments varies with the number of agents and the risk attitudes of agents. Here the cost function is specified as $c(\lambda)=\lambda^2/2$.
\begin{example}\label{ex:power}
    Consider agents are associated with $h(t)=1-(1-t)^{\alpha}$. It is clear that $h$ is concave with $\alpha \in [1, \infty)$ and is convex with $\alpha \in (0,1]$. 
    The investment $X$ here is assumed to follow  the uniform distribution $\mathrm U(0,1)$. We aim to explore the relationship between the optimal value $\lambda^{*}$ and the risk preference characterized by $\alpha$. Some numerical results are presented in Figure \ref{fig:power}.
    \begin{figure}[ht]
    \centering
    \includegraphics[width=8cm]{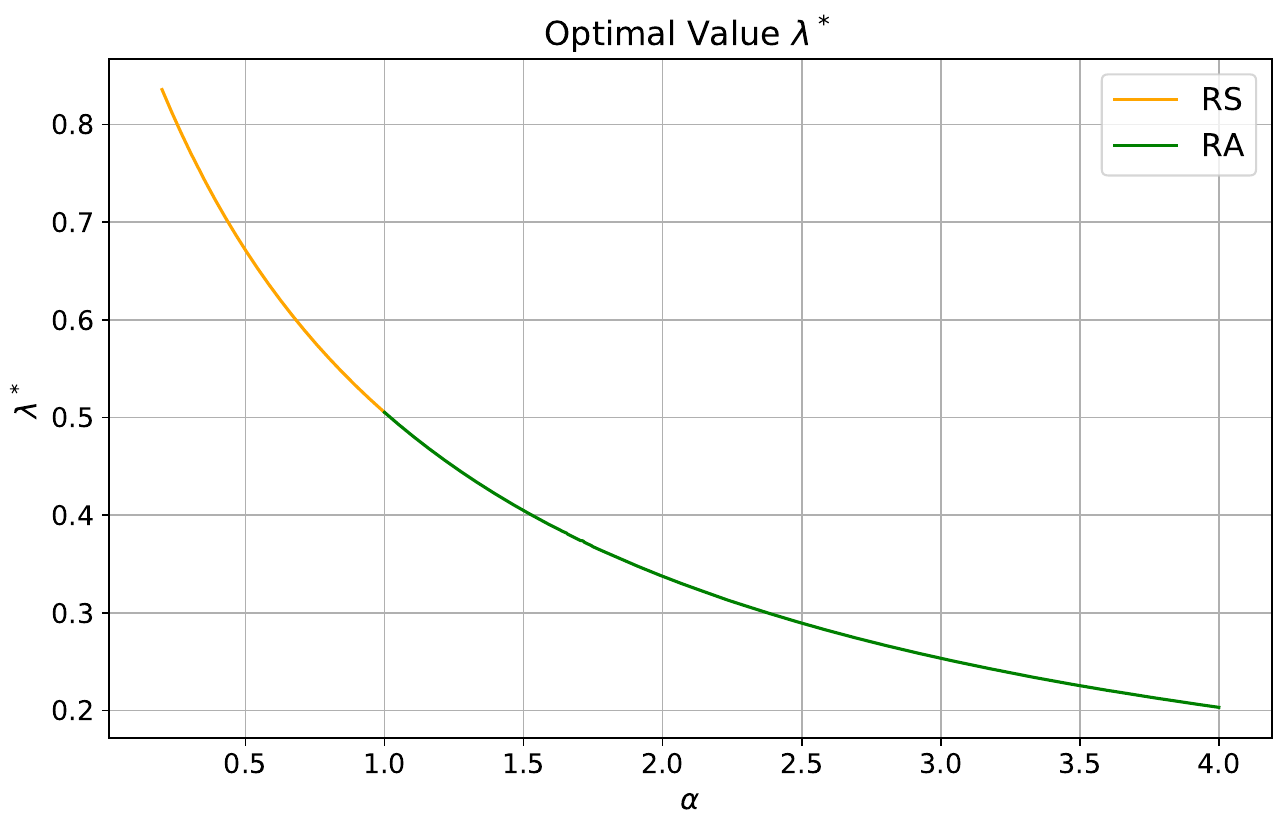}
    \caption{Optimal value of $\lambda^*$ with $h(t)=1-(1-t)^{\alpha}.$}
    \label{fig:power}
    \end{figure} 
    Proposition \ref{prop:power} states the optimal value $\lambda^{*}$ would not vary with $n$, and  that is the reason why we only have one single line for $\alpha \in (0,1]$. From Figure \ref{fig:power}, we observe that the optimal value $\lambda^*$ decreases as $\alpha$ increases, implying that the manager on behalf of risk-averse agents tends to invest less in risky assets compared to the case of risk-seeking agents.  
\end{example}

\begin{example}\label{ex:wang}
    Consider a distortion function, $h(t)=\Phi(\Phi^{-1}(t)+\alpha)$, known as a   transform of \cite{wang2000class}. Agents are risk-averse if $\alpha \in [0, \infty)$, while they are risk-seeking if $\alpha \in (-\infty, 0]$. Take a uniform random variable $X \sim \mathrm U(0,1)$. The optimal value of $\lambda$ is computed by varying the number of agents within the group and adjusting the risk preference parameter $\alpha$ from negative to positive values. Numerical results are presented in the Figure \ref{fig:wang}.
    \begin{figure}[ht]
    \centering
    \includegraphics[width=8.5cm]{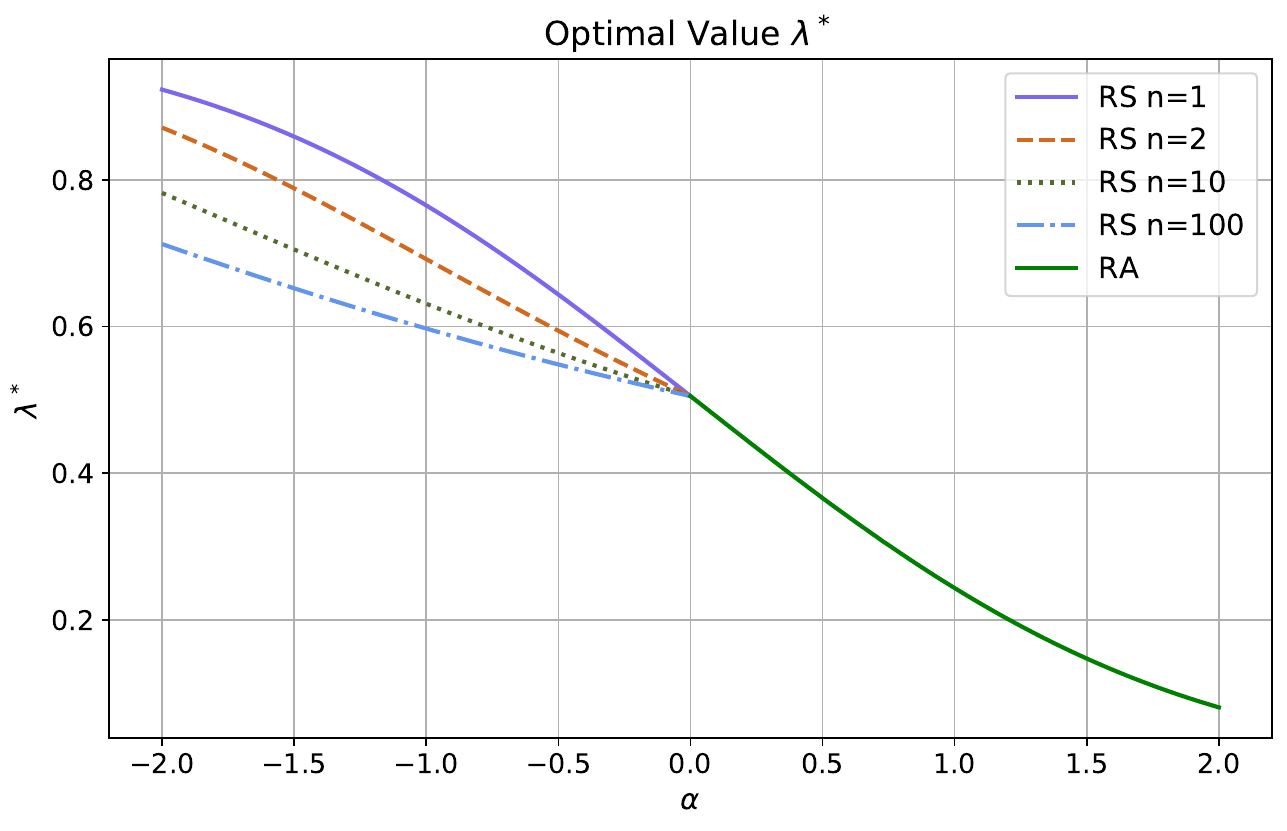}
    \caption{Optimal value of $\lambda^*$ with $h(t)=\Phi(\Phi^{-1}(t)+\alpha).$}
    \label{fig:wang}
    \end{figure} 
    Similar to Example \ref{ex:power}, the more risk-seeking the agents are, the greater the allocation of risky investments. In this case, the optimal value   $\lambda^*$ varies with different sizes of participants and decreases as the number of agents   increases. 
    The economic intuition is that as more risk-seeking agents join the group, they tend to engage more in gambling with each other. Consequently, the randomness in outcomes arises from two sources: the gambling itself and the risky investment. With an increase in risk-seeking agents in the group, there is a tendency to focus on gambling among themselves rather than increasing investment in  the financial market due to its increasing marginal cost; recall that gambling among participants themselves incur no additional cost for the group. It is natural to wonder if this is always the case; that is, whether the optimal proportion of risky investments will always decrease with an increasing number of agents in the pool. The answer is negative. We provide a counter-example in Appendix \ref{sec:port_exp} that shows the optimal proportion of risky investments can actually increase with more agents involved.
\end{example}

\section{Inverse S-shaped distortion functions}\label{sec:inverse}
We now consider another risk sharing  problem among agents with a special distortion risk measure which is neither concave nor convex.  \cite{tversky1992advances} introduced an inverse S-shaped distorted probability function expressed as $$h_{\mathrm{KT}}(t)=\frac{t^\gamma}{\left(t^\gamma+(1-t)^\gamma\right)^{1 / \gamma}},$$ where $\gamma \in (0,1)$. This weighting function, also known as a KT distortion function, exhibits a concave-convex shape, indicating a mix of risk-seeking and risk-averse preferences. 

\begin{assumption}\label{assum:1}
    The distortion risk functional $h:[0,1] \mapsto [0,1]$ is concave-convex.
\end{assumption}

Denoting by $\underline{h}:[0,1] \rightarrow[0,1]$ the convex envelope of $h$, which is defined as the largest convex function such that $\underline{h}(t) \leq h(t)$ for all $t \in[0,1]$. Clearly, $\underline{h}$ is a distorted probability function being dominated pointwise by $h$. Thus, following from the Lebesgue–Stieltjes integral representation of $\rho_h(X)$ (see \cite{dhaene2012remarks}), for any $X \in \mathcal{X}$ we have 
$$
\rho_{\underline{h}}(X)=\int_0^1 F_X^{-1}(1-t) \mathrm{d} \underline{h}(t) \leq \int_0^1 F_X^{-1}(1-t) \mathrm{d} h(t)=\rho_h(X).
$$

Next, we explore the risk sharing problem for agents whose risk preferences are modeled by concave-convex distortion functions. The subsequent theorem presents an explicit formula for the corresponding counter-monotonic inf-convolution, which depends  on the convex envelope of the distortion functions. This result has a condition that holds if the number of agents is involved in the risk sharing problem is larger than a constant $1/(1-t_0)$, discussed later.

\begin{theorem}\label{theorem:concave_convex}
    Suppose $\mathcal X=L^-$ and  $X\in \mathcal{X}^{\perp}$. Assume $h\in \mathcal{H}$ satisfies Assumption \ref{assum:1}. 
    If $n\geq 1/(1-t_{0})$, where $t_{0}=\sup \left\{t\in (0,1]: h^{\prime}(t)< h(t)/t\right\}$,
    then it holds
    \begin{align*}
        \dboxminus_{i=1}^{n} \rho_{h}(X) = \dboxminus_{i=1}^{n} \rho_{\underline{h}}(X) = \rho_{g}(X),
    \end{align*}
    where $g(t)=n \underline{h}(1-(1-t)/n)-n \underline{h}(1-1/n)$ for $t\in [0,1]$.  
\end{theorem}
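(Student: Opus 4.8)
The plan is to establish the two equalities separately, reducing everything to the convex case already settled in Theorem \ref{theorem:n_convex}. The second equality $\dboxminus_{i=1}^{n} \rho_{\underline h}(X) = \rho_g(X)$ requires no new work: the convex envelope $\underline h$ is by construction a convex element of $\mathcal{H}$ (it is increasing, with $\underline h(0)=0$ and $\underline h(1)=h(1)=1$), and the displayed $g$ is exactly the distortion function produced by Theorem \ref{theorem:n_convex}(ii) with $\underline h$ in place of $h$ (the setting there is $\mathcal X=L^-$, $X\in\mathcal X^\perp$, which is our hypothesis). Hence the whole content of the theorem is the first equality $\dboxminus_{i=1}^{n} \rho_{h}(X) = \dboxminus_{i=1}^{n} \rho_{\underline h}(X)$, which I would prove by pinching between two inequalities.

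For the lower bound I would use monotonicity of $h\mapsto\rho_h$. Since $\underline h\le h$ pointwise (it is a minorant of $h$), Lemma 1 of \cite{wang2020characterization} gives $\rho_{\underline h}(Y)\le\rho_h(Y)$ for every $Y$, so $\sum_{i}\rho_{\underline h}(X_i)\le\sum_i\rho_h(X_i)$ for every allocation in the common set $\mathbb{A}_n^{-}(X)$. Taking infima over $\mathbb{A}_n^{-}(X)$ yields $\dboxminus_{i=1}^{n}\rho_{\underline h}(X)\le\dboxminus_{i=1}^{n}\rho_h(X)$, that is, $\rho_g(X)\le\dboxminus_{i=1}^{n}\rho_h(X)$.

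For the matching upper bound I would exhibit a single counter-monotonic allocation attaining $\rho_g(X)$. First I would record the shape of the convex envelope of a concave-convex $h$: the line through the origin is tangent to the convex branch of $h$ at the point $t_0$ of the statement. Indeed, with $\phi(t)=h(t)/t$ one has $\phi'(t)=(h'(t)-h(t)/t)/t$, so $h'(t)<h(t)/t$ exactly on the range where $\phi$ is decreasing and $h'(t)>h(t)/t$ beyond; thus $t_0=\sup\{t:h'(t)<h(t)/t\}$ is the minimizer of $\phi$, and $\underline h$ is linear on $[0,t_0]$ and coincides with $h$ on $[t_0,1]$. Then, using $X\in\mathcal X^\perp$, I would take the uniform scapegoat allocation $X_i=X\mathbb{1}_{A_i}$ with $(A_1,\dots,A_n)\in\Pi_n$ independent of $X$ and $\mathbb{P}(A_i)=1/n$. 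Since $X\le 0$, a direct Choquet computation gives, for $x<0$, $\mathbb{P}(X\mathbb{1}_{A_i}>x)=1-\tfrac1n\mathbb{P}(X\le x)$, so the argument of $h$ always lies in $[1-1/n,\,1]$. The hypothesis $n\ge 1/(1-t_0)$ is precisely the statement that $1-1/n\ge t_0$, hence this argument stays inside $[t_0,1]$, where $h=\underline h$. Consequently $\rho_h(X\mathbb{1}_{A_i})=\rho_{\underline h}(X\mathbb{1}_{A_i})$ for each $i$, and summing reproduces the value $\rho_g(X)$ computed for $\underline h$ in Lemma \ref{lemma:convex}(ii). This gives $\dboxminus_{i=1}^{n}\rho_h(X)\le\rho_g(X)$, which together with the lower bound closes the argument.

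The main obstacle I anticipate is the bookkeeping around the convex envelope: rigorously justifying that $\underline h$ is linear up to $t_0$ and equal to $h$ thereafter (including degenerate cases where the concave part is absent or $t_0$ sits at an endpoint), and confirming that the Choquet arguments $1-\tfrac1n\mathbb{P}(X\le x)$ never leave $[t_0,1]$. Everything else is a reduction to Theorem \ref{theorem:n_convex}(ii) and Lemma \ref{lemma:convex}(ii); the sole role of the dimension condition $n\ge 1/(1-t_0)$ is to force the relevant probabilities into the convex region where $h$ and $\underline h$ agree, which is what allows the explicit $h$-computation to collapse onto the $\underline h$-value.
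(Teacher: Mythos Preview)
Your proposal is correct and follows essentially the same route as the paper's proof: the lower bound $\dboxminus_{i=1}^{n}\rho_{\underline h}(X)\le\dboxminus_{i=1}^{n}\rho_h(X)$ comes from $\underline h\le h$, the upper bound comes from evaluating $\sum_i\rho_h(X\mathbb{1}_{A_i})$ on the uniform scapegoat allocation and observing that the arguments of $h$ lie in $[1-1/n,1]\subseteq[t_0,1]$ where $h=\underline h$, and the second equality is a direct invocation of Theorem~\ref{theorem:n_convex}(ii). The paper cites \citet[Lemma~A.8]{ghossoub2019optimal} for the structure of $\underline h$, whereas you supply the short tangent-line argument via $\phi(t)=h(t)/t$; otherwise the two proofs are the same.
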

\begin{proof}
    Assume $h\in \mathcal{H}$ is concave-convex with  $t_{0}=\sup \left\{t\in (0,1]: h^{\prime}(t)< h(t)/t\right\}$.
    One can check that
    $h$ is convex on $[t_0,1]$ and identical to $\underline h$;  see e.g., 
    \citet[Lemma A.8]{ghossoub2019optimal}.
    Thus, the convex envelope $\underline{h}(t)$ is determined as 
    \begin{equation}
    \underline{h}(t)= \begin{cases}\frac{h(t_{0})}{t_{0}}t & \text { if } t \in\left[0, t_0\right] \\ h(t) & \text { if } t \in\left[t_0, 1\right]\end{cases}.
    \end{equation}
    If $n\geq 1/(1-t_{0})$, it is trivial to verify $h(t)=\underline{h}(t)$ for $t\in [1-1/n, 1]$.
    
    Let $\mathcal X=L^-$. 
    For $X\in \mathcal{X}^{\perp}$, it is trivial to verify $\dboxminus_{i=1}^{n} \rho_{\underline{h}}(X) \leq \dboxminus_{i=1}^{n} \rho_{h}(X)$, which
    directly follows from $\underline{h}$ being dominated by $h$ pointwisely. Conversely, let $(X_{1}, \ldots, X_{n})\in \mathbb{A}_{n}^{-}(X)$ be a jackpot allocation, i.e., $X_{i}=X\mathbb{1}_{A_{i}}$ with $\mathbb{P}(A_{i})=1/n$ for all $i\in [n]$ and $(A_{1}, \ldots, A_{n})\in \Pi_{n}$ independent of $X$. For such $(A_{1}, \ldots, A_{n})\in \Pi_{n}$ and $X\in \mathcal{X}^{\perp}$, it follows that
    \begin{align*}
        \sum_{i=1}^{n}\rho_{h}(X\mathbb{1}_{A_{i}}) 
        &= n\int_{-\infty}^{0}
        \left(h\left(\frac{1}{n}\mathbb{P}(X>x)+1-\frac{1}{n}\right)-1 \right)\,\mathrm{d}x \\
        &= n\int_{-\infty}^{0}
        \left(\underline{h}\left(1-\frac{1}{n}\mathbb{P}(X\leq x)\right)-1 \right)\,\mathrm{d}x\\
        &= n\int_{0}^{\infty}
        \left(\underline{h}\left(1-\frac{1}{n}\mathbb{P}(-X\geq x)\right)-1 \right)\,\mathrm{d}x \\
        &= - n \int_{0}^{\infty} \tilde{\underline{h}}\left(\frac{1}{n}\mathbb{P}(-X\geq x)\right) \,\mathrm{d}x \\
        & = -\rho_{\tilde{\underline{h}}^{*}}(-X)=\rho_{g}(X),
    \end{align*}
    where $\tilde{\underline{h}}(t)=1-\underline{h}(1-t)$, $\tilde{\underline{h}}^{*}(t)=n\tilde{\underline{h}}(t/n)$ and $g(t)=\tilde{\underline{h}}^{*}(1)-\tilde{\underline{h}}^{*}(1-t) $, which can be further simplified as $g(t)=n\underline{h}(1-(1-t)/n)-n\underline{h}(1-1/n)$. 
    Furthermore, we can verify $\rho_{g}(X)=\dboxminus_{i=1}^{n} \rho_{\underline{h}}(X)$ from convexity of $\underline{h}$ and (ii) of Theorem \ref{theorem:n_convex}. Thus,    
    $\dboxminus_{i=1}^{n} \rho_{h}(X) \leq \dboxminus_{i=1}^{n} \rho_{\underline{h}}(X)$ for $X\in \mathcal{X}^{\perp}$.
    Combining the above arguments, the desired result follows.
\end{proof}

The typically value $t_0$ in  Theorem \ref{theorem:concave_convex}
 is not close to $1$ in behaviour economics,
 and hence $1/(1-t_0)$ is not very large.
 For instance, 
in the classic work of \cite{wu1996curvature}, 
the KT  distortion function has a best estimated parameter $\gamma=0.71$, which corresponds to $t_0\approx 0.768$; see Figure 6 of \cite{wu1996curvature}.
Therefore, it suffices for $n\ge 5$ to apply the result of Theorem \ref{theorem:concave_convex} in that model.

An immediate consequence of Theorem \ref{theorem:concave_convex} is that when a good number of agents are involved in the risk pool and their preferences are modeled by inverse S-shaped distortion risk measures, these agents will behave like risk-seeking agents discussed in Section \ref{sec:convex}; that is, they will achieve the same optimal value and share the same optimal allocations. The intuitive economic explanation is that when the number of agents in the pool is not too small, agents may prefer gambling against each other to improve the outcome of the risk minimization problem
since those agents in Theorem \ref{theorem:concave_convex} are non-risk-averse.

Theorem \ref{theorem:concave_convex} implies that, with the space of risks being nonpositive random variables,   scapegoat allocations are Pareto optimal for agents with inverse S-shaped distortion functions under some mild conditions. 
As we have seen from Proposition \ref{prop:summary}, this is the same situation for risk-seeking agents.

\section{Conclusion}
\label{sec:conclusion}

The comonotonic risk sharing problem has been well studied, and the comonotonic inf-convolution of $n$ distortion risk measures can be determined explicitly, as studied extensitve in the literature. It is well-known that comonotonicity being optimal is a consequence of the concavity of distortion functions. 
Our paper addresses different situations from the classic literature, where the agents are possibly associated with convex distortion functions. 

Our study mainly focuses on counter-monotonic risk sharing problems for agents with homogeneous risk distortion functions and provides a comparative analysis of three types of risk sharing problems: unconstrained, comonotonic, and counter-monotonic. This analysis is conducted for cases where the risk is pooled among risk-averse agents, risk-seeking agents, and agents with inverse S-shaped distortion functions.
We provide explicit formulas of counter-monotonic inf-convolution for each of these three scenarios. In addition, our results  provide insights into solving the unconstrained risk sharing problem for some non-concave distortion functions, i.e., convex or inverse S-shaped functions, typically leading to counter-monotonic Pareto-optimal allocations (under some mild conditions for the latter one).

By combining the results for risk-averse agents and risk-seeking agents, we are able to solve the portfolio optimization problem as described in Section \ref{sec:port}, and the optimal strategies for risky assets can be determined explicitly. Some examples are presented in the Section \ref{sec:port} and Appendix 
 \ref{sec:port_exp}.

The explicit results obtained in this paper assumed that the preferences are the agents are homogeneous, that is, using the same risk measure. 
The case of heterogeneous risk measures 
imposes substantial technical challenges and will need separate analysis. 

 \bibliography{ref}

\newpage
\appendix
\section{Another example of a portfolio manager's problem}\label{sec:port_exp}
As demonstrated in Example \ref{ex:wang} from Section \ref{sec:port}, agents' risk preferences are modeled using Wang transform. The model indicates that the optimal value of $\lambda^*$ for a group with more agents tends to be smaller than that for a group with fewer agents. However, this is not always the case; the opposite outcome can also occur, depending on the specific distortion function applied. 
In this section, we construct a new distortion function to model the agents' risk preferences, showing that, in contrast to the previous model, the optimal value of $\lambda^*$ for a group with more agents tends to be larger than that for a group with fewer agents.

The dual distortion function $\tilde{h}$ is constructed as:
\begin{align*}
    \tilde{h}(t)=
    \frac{k^{\alpha}}{1+(k-1)\alpha}t^\alpha \mathbb{1}_{\left\{t\leq \frac{1}{k}\right\}}+ \left(\frac{k \alpha}{1+(k-1)\alpha}t + \frac{1- \alpha}{1+(k-1)\alpha}\right)\mathbb{1}_{\left\{t> \frac{1}{k}\right\}},
\end{align*}
where $\alpha\in \mathbb{R}_{+}$ and $k$ is a positive integer. The distortion function is thus defined as $h(t)=1-\tilde{h}(1-t)$.
It is trivial to see $h$ is convex with $\alpha \in (0,1]$ and $h$ is concave with $\alpha \in [1,\infty)$.

Here we choose $k=10$. Numerical results are presented below.
\begin{figure}[ht]
    \centering
    \includegraphics[width=8.5cm]{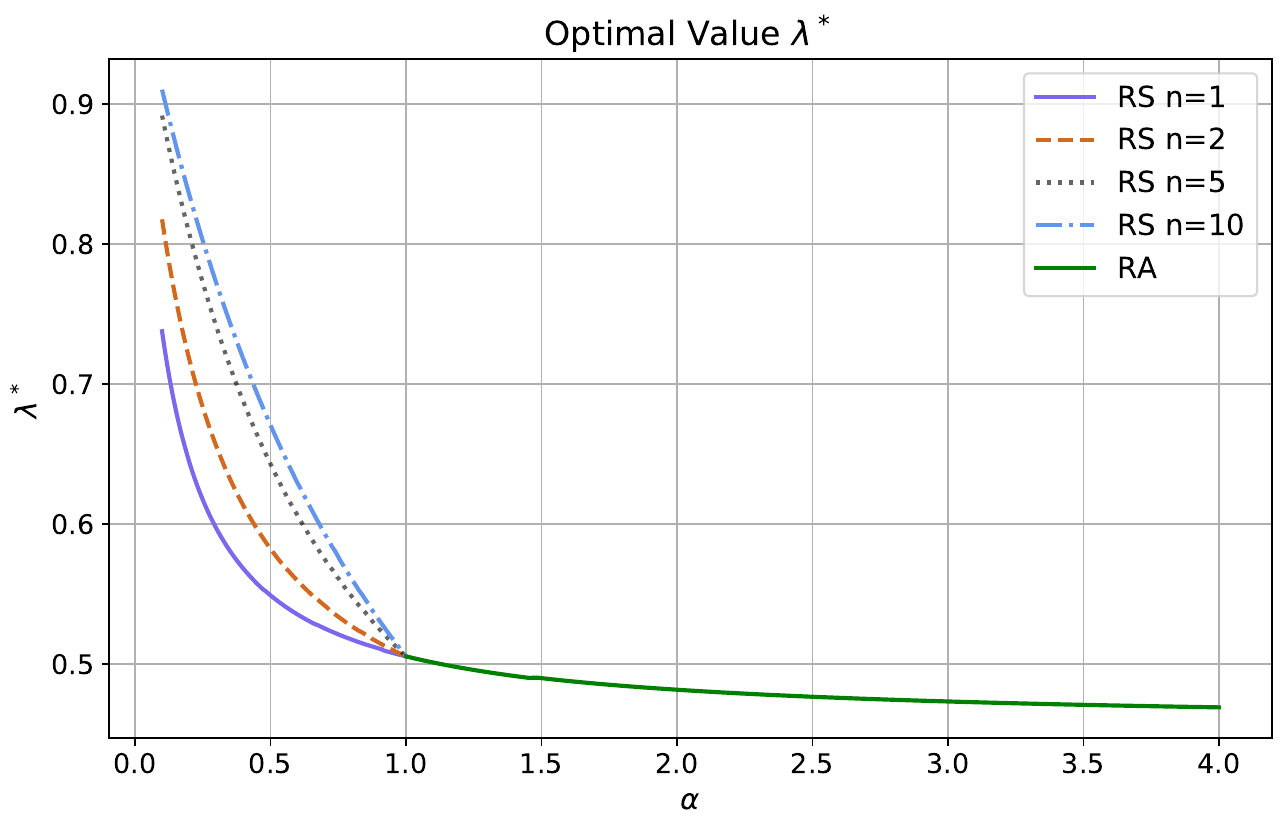}
    \caption{Optimal value of $\lambda^*$ with $1-\tilde{h}(1-t)$.}
    \label{fig:new}
    \end{figure}

\section{A counter-example}
\label{sec:counter-example}

Below we present a counter-example showing $\dboxminus_{i=1}^{3}\rho_{i}\ne  \rho_1 \dboxminus \rho_2 \dboxminus \rho_3  $.
\begin{example}\label{exa:repeat}
    Define a probability space $(\Omega^{\prime}, \mathcal{F}^{\prime}, \mathbb{P}^{\prime})$ where $\Omega^{\prime}=\left\{\omega_1, \omega_2,\omega_3,\omega_4\right\}$, 
   $ \mathcal{F}^{\prime}$ is the power set of $\Omega$, and $\mathbb{P}^{\prime}$ is such that $\mathbb{P}^{\prime}(\omega_i)=1/4$ for $i=1, \dots, 4$. Define three distributions $F_{1}=\text{Bernoulli(1/2)}$, and $F_{2}= F_{3}=1/2 \times \text{Bernoulli(1/4)}$. 
    Suppose that three agents have the risk measures given by 
    \begin{align*}
        \rho_{i}(X)=1-\mathbb{1}_{\left\{X\sim F_{i}\right\}}, \  i= 1,2,3 \  \text{and} \  X\in \mathcal{X}.
    \end{align*}
    Then we define four random variables $X=\mathbb{1}_{\left\{\omega_1, \omega_2,\omega_3\right\}}$, 
    $X_{1}=\mathbb{1}_{\left\{\omega_1,\omega_2\right\}}$, and $X_{2}=X_{3}=1/2 \times  \mathbb{1}_{\left\{\omega_3\right\}}$. Clearly, $(X_1, X_2, X_3)$ is an allocation of $X$. It is straightforward to show both $(X_1, X_2)$ and $(X_1+X_2, X_3)$ are counter-monotonic, whereas $(X_2, X_3)$ is comonotonic. It follows that
    \begin{align*}
        \rho_1 \dboxminus \rho_2 \dboxminus \rho_3 (X)\leq \rho_1(X_1)+\rho_2(X_2)+\rho_3(X_3)=0.
    \end{align*}
    Hence, we have $\rho_1 \dboxminus \rho_2 \dboxminus \rho_3 (X)=0$ since $\rho_i$ are non-negative for all $i\in [3]$. 
    Furthermore, 
    for any $(Y_1, Y_2, Y_3)\in \mathbb{A}_{3}^{-}(X)$, 
    at least one of $\rho_i (Y_i)$ is equal to 1 as there does not exist a counter-monotonic allocation such that $Y_i \sim F_i$, $i\in [3]$. Hence,  $\dboxminus_{i=1}^{3}\rho_{i}(X)\neq 0 > \rho_1 \dboxminus \rho_2 \dboxminus \rho_3 (X)$. 
    \end{example}

\end{document}